\documentclass[letterpaper, 10 pt, conference]{ieeeconf}  
\IEEEoverridecommandlockouts                              
\overrideIEEEmargins                                      
\linespread{0.99}
\pdfminorversion=4  
\usepackage{graphics} 
\usepackage{epsfig} 
\usepackage{amsmath,amssymb,amsfonts}
\usepackage{graphicx}  
\usepackage{tikz} 
\usepackage{epstopdf}
\usepackage{cite}
\usepackage{algorithm,algorithmic,setspace}
\usepackage{url}
\usepackage{soul}
\usepackage[colorlinks=true,linkcolor=blue, citecolor=blue, urlcolor=blue]{hyperref}
\allowdisplaybreaks


\newtheorem{thm}{\bf Theorem}
\newtheorem{lem}{\bf Lemma}
\newtheorem{rem}{\bf Remark}
\newtheorem{prop}{\bf Proposition}
\newtheorem{assum}{\bf Assumption}
\newtheorem{defn}{\bf Definition}

\DeclareMathOperator{\tr}{tr}
\DeclareMathOperator{\dom}{dom}  
\DeclareMathOperator{\diag}{diag}

\newcommand{\T}{^{\top}} 
\newcommand{\ie}{\textit{i.e.}}
\allowdisplaybreaks

\title{\LARGE \bf 
Nonlinear Attitude Estimation Using Intermittent Linear Velocity\\ and Vector Measurements 
}

\author{Miaomiao Wang and Abdelhamid Tayebi 
	\thanks{This work was supported by the National Sciences and Engineering Research Council of NSERC-DG RGPIN-2020-06270. }
	\thanks{The authors are with the Department of Electrical and Computer Engineering, Western University, London, ON N6A 3K7, Canada. A. Tayebi is also with the Department of Electrical Engineering, Lakehead University, Thunder Bay, ON P7B 5E1, Canada.
		{\tt\small mwang448@uwo.ca}, {\tt\small atayebi@lakeheadu.ca}}%
} 

\begin{document}

\maketitle
\thispagestyle{empty}
\pagestyle{empty}
	
\begin{abstract}
	 This paper investigates the problem of continuous attitude estimation on $SO(3)$ using continuous angular velocity and linear acceleration measurements as well as intermittent linear velocity and inertial vector measurements. First, we propose a nonlinear observer for the case where all the measurements are continuous  and almost global asymptotic stability (AGAS) is shown using the notion of almost global input-to-state stability (ISS) on manifolds. Thereafter, a hybrid attitude observer, with AGAS guarantees, is proposed in terms of intermittent linear velocity and vector measurements.  Numerical simulation results are presented to illustrate the performance of the proposed hybrid observer.
\end{abstract}

\section{Introduction}    
The algorithms used for the determination of the orientation, or attitude, of a rigid body system, are instrumental in robotics and aerospace applications. The attitude can be determined through the integration of the angular velocity which is not a viable solution in practice due to the integral drift over time due to measurement bias and noise. In the early 1960s, many static attitude determination techniques, relying on body-frame observations of some vectors known in the inertial frame, have been introduced (see, for instance, \cite{Wahba,shuster1979}). These vector observations can be obtained using different types of sensors such as low-cost inertial measurement unit (IMU) sensors (including an accelerometer, a gyroscope and a magnetometer), or sophisticated sensors such as sun sensors and star trackers. However, these static attitude determination algorithms, although simple, do not perform well in the presence of measurement noise. This motivated their reinforcement with Kalman-type filters leading to dynamic attitude estimation algorithms (see the survey paper \cite{Crassidis2007survey}). Although successfully implemented in many practical applications,  
these Kalman-based dynamic estimation techniques rely on local linearizations (approximations) and lack rigorous stability analysis in the global sense.

Recently, a class of geometric nonlinear attitude observers, evolving on the Special Orthogonal group $SO(3)$, have made their appearances in the literature. These geometric observers take into account the topological properties of group $SO(3)$ and provide AGAS guarantees, \ie, the estimated attitude converges asymptotically to the actual one from almost all initial conditions except from a set of zero Lebesgue measure (see, for instance, \cite{mahony2008nonlinear}). 
Due to the space topology of $SO(3)$, AGAS is the strongest result one can achieve via time-invariant smooth observers. To strengthen the stability results, hybrid observers for global attitude estimation have been considered in \cite{wu2015globally,berkane2017hybrid}. On the other hand, in many low-cost applications, most of the existing attitude estimation techniques rely on IMU  measurements and assume  negligible linear accelerations. This small-acceleration assumption allows to use the gravity vector as one of the inertial vectors measured in the body frame via an accelerometer. In applications involving non-negligible linear accelerations, one can use the so-called velocity-aided attitude observers that rely on IMU measurements and the linear velocity in either the inertial frame \cite{roberts2011attitude,grip2015globally,berkane2017attitude} or the body frame \cite{bonnabel2008symmetry,hua2016stability}.

 From the implementation point of view, attitude estimation often involves different types of sensors with different sampling rates. For instance, the measurements of a global positioning system (GPS)  and a vision system are obtained at much lower rates than the IMU measurements. However, most of the existing attitude observers are designed based on the assumption of continuous output measurements. It is clear that the stability and performance would be altered if one tries to implement these continuous-time observers with intermittent measurements. In this context, some more recent results dealing with discrete measurements have been considered, for instance, the discrete-time attitude observers proposed in \cite{barrau2015intrinsic,bhatt2020rigid} and the continuous-discrete attitude observers proposed in \cite{khosravian2015recursive,berkane2019attitude}. The latter category assumes that the high-rate measurements of the angular velocity are continuous and the low-rate measurements of inertial vectors are intermittent. A predictor-observer approach has been proposed in \cite{khosravian2015recursive} based on a cascade combination of an output predictor and a continuous attitude observer. The output predictor was designed to smooth the vector measurements through a forward integration on $SO(3)$ of the continuous angular velocity measurements. In \cite{berkane2019attitude}, the authors consider a predict-update hybrid approach, where the estimated attitude is continuously updated by integrating the continuous angular velocity and discretely updated through jumps upon the arrival of the intermittent vector measurements. 
 
In this paper, we consider the problem of continuous attitude estimation using continuous (high-rate) angular velocity and linear acceleration measurements and intermittent (low-rate) linear velocity and inertial vector measurements. We first propose a continuous-time velocity-aided attitude observer on $SO(3)\times \mathbb{R}^6$ with AGAS guarantees relying on the notion of almost global ISS on manifolds. Then, motivated by the work \cite{wang2020nonlinear}, we propose a hybrid velocity-aided attitude observer in terms of intermittent linear velocity and inertial vector measurements with AGAS guarantees. In particular, all the estimated states are continuously updated through integration using the continuous angular velocity and linear acceleration measurements, and discretely updated upon the arrival of the intermittent linear velocity and vector measurements. The proposed hybrid observer has a similar structure as \cite{berkane2019attitude}, while the estimated attitude from our hybrid observer is continuous without any additional smoothing algorithm. The fact that our proposed hybrid observer generates continuous estimates of the attitude makes it suitable for practical applications involving observer-controller implementations.


\section{Preliminaries} \label{sec:backgroud}
\subsection{Notations and Definitions}
The sets of real, non-negative real, natural and positive natural numbers are denoted by $\mathbb{R}$, $\mathbb{R}_{\geq 0}$, $\mathbb{N}$ and $\mathbb{N}_{>0}$, respectively. We denote by $\mathbb{R}^n$ the $n$-dimensional Euclidean space and $\mathbb{S}^{n-1}$ the set of unit vectors in $\mathbb{R}^{n}$. The Euclidean norm of a vector $x\in \mathbb{R}^n$ is defined as $\|x\| = \sqrt{x\T x}$. Let $I_n$ denote the  $n$-by-$n$ identity matrix, $0_{n}$ denote the $n$-by-$n$ zero matrix, and $0_{n \times m}$ denote the $n$-by-$m$ zero matrix.  
For a given symmetric matrix $A\in \mathbb{R}^{n\times n}$, we define $\mathcal{E}(A)$ as the set of all unit-eigenvectors of $A$, and $\lambda_m(A)$ and $\lambda_M(A)$ as the minimum and maximum eigenvalues of $A$, respectively. 
Given two matrices, $A,B\in \mathbb{R}^{m\times n}$, their Euclidean inner product is defined as $\langle\langle A,B\rangle\rangle = \tr(A\T B)$ and the Frobenius norm of $A$ is defined as $\|A\|_F = \sqrt{\langle\langle A,A\rangle\rangle}$. For each $x=[x_1,x_2,x_3]\T \in \mathbb{R}^3$, we define $x^\times$ as a skew-symmetric matrix given by
\[
x^\times = \begin{bmatrix}
	0 & -x_3 & x_2 \\
	x_3 & 0 & -x_1 \\
	-x_2 & x_1 & 0
\end{bmatrix} 
\] 
and $\text{vec}(\cdot)$ as the inverse operator of the map $(\cdot)^\times$, such that $\text{vec}(x^\times) = x$.	 
For a matrix $A\in \mathbb{R}^{3\times 3}$, we denote   $\mathbb{P}_a(A) := \frac{1}{2}(A-A\T)$ as the anti-symmetric projection of $A$. Define the composition map $\psi: =\text{vec} \circ \mathbb{P}_a $  such that, for a matrix $A=[a_{ij}] \in \mathbb{R}^{3\times 3}$, one has
$$
\psi (A) := \text{vec} (\mathbb{P}_a(A)) =\frac{1}{2}[
a_{32} - a_{23},
a_{13} - a_{31},
a_{21} - a_{12}
]\T.
$$
For any $A\in \mathbb{R}^{3\times 3}, x\in \mathbb{R}^3$, one can verify that
$
\langle\langle A, x^\times\rangle\rangle 
= 2x\T \psi (A)
$. The 3-dimensional  \textit{Special Orthogonal group} is denoted by  
$$
SO(3): = \left\{R\in \mathbb{R}^{3\times 3}| R\T R = I_3, \det(R) = +1\right\}.
$$
The \textit{Lie algebra} of $SO(3)$, denoted by $\mathfrak{so}(3)$ is given by
$$
\mathfrak{so}(3) := \left\{\Omega\in \mathbb{R}^{3\times 3}| \Omega\T=-\Omega \right\}.
$$	 
For any $R\in SO(3)$, we define $|R|_I\in [0,1]$ as the normalized Euclidean distance on $SO(3)$ with respect to the identity $I_3$, which is given by $|R|_I^2 = \tr(I_3-R)/4$. 
Let the map $\mathcal{R}_a: \mathbb{R} \times \mathbb{S}^2\to SO(3)$ represent the well-known angle-axis parameterization of the attitude defined by
$$
\mathcal{R}_a(\theta,u) :=  I_3 + \sin(\theta) u^\times + (1-\cos(\theta))(u^\times)^2
$$
with $\theta\in \mathbb{R}$ denoting the
rotation angle and $u\in \mathbb{S}^2$ denoting the rotation axis. The following identity will be used throughout this paper
\begin{equation} 
	\psi(Q X) = \frac{1}{2} \sum_{i=1}^m \rho_i (X\T r_i) \times  r_i, ~ \forall X\in SO(3)   \label{eqn:psiAR}
\end{equation} 
where $m\in \mathbb{N}_{>0}$, $ \rho_i\in \mathbb{R}, r_i\in \mathbb{R}^3$ for all $i\in\{1,2,\dots,m\}$ and $Q=\sum_{i=1}^m \rho_i r_ir_i\T$.  

\subsection{Hybrid Systems Framework} 
Consider a  smooth manifold $\mathcal{M}$ embedded in $\mathbb{R}^n$, and  let $T \mathcal{M}:=\bigcup_{x\in \mathcal{M}} T_x \mathcal{M}$ 
denote its tangent bundle. A general model of a hybrid system is given as \cite{goebel2012hybrid}:
\begin{equation}\mathcal{H}:  
	\begin{cases}
		\dot{x} ~~= F(x),& \quad x \in \mathcal{F}   \\
		x^{+} \in G(x),& \quad x \in \mathcal{J}
	\end{cases} \label{eqn:hybrid_system}
\end{equation}
where $x\in \mathcal{M}$ denotes the state, $x^+$ denotes the state after an instantaneous jump, the \textit{flow map} $F: \mathcal{M} \to T \mathcal{M}$ describes the continuous flow of $x$ on the \textit{flow set} $\mathcal{F} \subseteq \mathcal{M}$, and the \textit{jump map} $G: \mathcal{M}\rightrightarrows  \mathcal{M}$ (a set-valued mapping from $\mathcal{M}$  to $\mathcal{M}$) describes the discrete flow of $x$ on the \textit{jump set} $\mathcal{J} \subseteq \mathcal{M}$.  
A solution $x$ to $\mathcal{H}$ is parameterized by $(t, j) \in \mathbb{R}_{\geq 0} \times \mathbb{N}$, where $t$ denotes the amount of time passed and $j$ denotes the number of discrete jumps that have occurred.  A subset $\dom x \subset \mathbb{R}_{ \geq 0} \times \mathbb{N}$ is a \textit{hybrid time domain} if for every $(T,J)\in \dom x$, the set, denoted by $\dom x \bigcap ([0,T]\times \{0,1,\dots,J\})$, is a union of finite intervals  of the form $\bigcup_{j=0}^{J} ([t_j,t_{j+1}] \times \{j\})$ with  a time sequence  $0=t_0 \leq t_1 \leq \cdots \leq t_{J+1}$. 
A solution $x$ to $\mathcal{H}$ is said to be  \textit{maximal} if
it cannot be extended by flowing nor jumping, and \textit{complete} if its domain $\dom x$ is unbounded. Let $|x|_{\mathcal{A}}$  denote the distance of a point $x$ to a closed set $\mathcal{A} \subset \mathcal{M}$, 
and then the set $\mathcal{A}$ is said to be: 
\textit{stable} for $\mathcal{H}$ if for each $\epsilon>0$ there exists $\delta>0$ such that each maximal solution $x$ to $\mathcal{H}$ with $|x(0,0)|_{\mathcal{A}} \leq \delta$ satisfies $|x(t,j)|_{\mathcal{A}} \leq \epsilon$ for all $(t,j)\in \dom x$; \textit{globally attractive} for $\mathcal{H}$ if  every maximal solution $x$ to $\mathcal{H}$  is complete and satisfies $\lim_{t+j\to \infty}|x(t,j)|_{\mathcal{A}} = 0$ for all $(t,j)\in \dom x$; \textit{globally asymptotically stable} (GAS) if it is both stable and globally attractive for $\mathcal{H}$. 
Moreover, the $\mathcal{A}$ is said to be   \textit{exponentially stable} for   $\mathcal{H}$ if there exist $\kappa, \lambda>0$ such that, every maximal solution  $x$  to $\mathcal{H}$ is complete and satisfies  $|x(t,j)|_{\mathcal{A}} \leq \kappa e^{-\lambda(t+j)}|x(0,0)|_{\mathcal{A}}$ for all $(t,j)\in \dom x$ \cite{teel2013lyapunov}.
We refer the reader to \cite{goebel2012hybrid} and references therein for more details on hybrid dynamical  systems.

\subsection{AGAS and Almost Global ISS}

Let $\mathcal{M}$ be a smooth manifold.  Consider the following general nonlinear system on the manifold $\mathcal{M}$:
\begin{equation}
	\dot{x} = f(x,u) \label{eqn:nonliear_system}
\end{equation}
where $x\in \mathcal{M}$ is the state, $u\in \mathcal{U} \subset \mathbb{R}^m$ is the input, and $f:\mathcal{M}\times \mathcal{U}\to T\mathcal{M}$ is a locally Lipschitz manifold map which satisfies $f(x,u)\in T_x \mathcal{M}$ for all $x\in \mathcal{M},u\in \mathcal{U}$. The system \eqref{eqn:nonliear_system} (with $u\equiv 0$) is said to be AGAS at an invariant compact set $\mathcal{A}\subset \mathcal{M}$ if the set $\mathcal{A}$ is stable and the state $x$ tends to the set $\mathcal{A}$ from any initial conditions in $\mathcal{M}$ except a set of zero Lebesgue measure. Throughout this paper, we will make use of the notion of almost global ISS in \cite{angeli2004almost}.
\begin{defn}\label{def:A_ISS} 
	System \eqref{eqn:nonliear_system} is  almost globally ISS with respect to the set $\mathcal{A}$, if $\mathcal{A}$ is locally asymptotically stable for \eqref{eqn:nonliear_system} with $u\equiv 0$ and there exists  $\gamma \in \mathcal{K}$ such that for each locally essentially bounded and measurable input $u: \mathbb{R}_{\geq 0} \to \mathcal{U}$, there exists a zero Lebesgue measure
	subset $\aleph_{u} \subset \mathcal{M}$ such that 
	\begin{align}
		\limsup_{t\to +\infty} |x(t,x_0,u )|_{\mathcal{A}} \leq \gamma (\|u\|_{\infty}), ~~\forall x_0\in \mathcal{M}\setminus\aleph_{u}.
	\end{align}
\end{defn} 
The following lemma, adopted from \cite[Theorem 2]{angeli2004almost}, provides AGAS for a nonlinear cascaded system consisting of an almost globally ISS system and a globally asymptotically stable (GAS) system.
\begin{lem}\label{lem:AGAS_conti}
	Consider the following cascaded system:
	\begin{subequations} \label{eqn:cascaded_system}
		\begin{align}
			\dot{x}  &= f(x,y)  \label{eqn:dot_x}\\
			\dot{y}  &= g(y) \label{eqn:dot_y}  
		\end{align}
	\end{subequations}
	where $(x,y) \in \mathcal{M} \times \mathcal{N}$,   $f:\mathcal{M} \times \mathcal{N}\to T\mathcal{M}$ and $g: \mathcal{M} \times \mathcal{N} \to T\mathcal{N}$ are  locally Lipschitz with $f(x,y)\in T_x \mathcal{M}$ and $g(y)\in T_y \mathcal{N}$ for all $(x,y) \in \mathcal{M} \times \mathcal{N}$.  
	Suppose that
	\begin{itemize}	 				
		\item [1)] the $x$-subsystem  is almost globally ISS with respect to $\mathcal{A}_x \subset \mathcal{M}$ and input $y$,  
		\item [2)] the $y$-subsystem  is GAS at $\mathcal{A}_y \subset  \mathcal{N}$
	\end{itemize}
	Then, the cascaded system \eqref{eqn:cascaded_system} is AGAS at $\mathcal{A}:=\mathcal{A}_x \times \mathcal{A}_y$.
\end{lem}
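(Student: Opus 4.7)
The plan is to follow the cascade argument of \cite[Theorem 2]{angeli2004almost}, with minor adaptations to the manifold setting, splitting the proof into a stability part and an almost global attractivity part. Since the $x$-subsystem is almost globally ISS, it is in particular locally asymptotically stable at $\mathcal{A}_x$ by Definition \ref{def:A_ISS}, which yields a local ISS estimate in a neighborhood of $\mathcal{A}_x$. Combined with the GAS (hence stability) of $\mathcal{A}_y$ for the $y$-subsystem, a standard local ISS-cascade argument gives stability of $\mathcal{A}$: given $\varepsilon>0$, first choose $\delta_y$ such that $|y(0)|_{\mathcal{A}_y}\leq \delta_y$ implies $|y(t)|_{\mathcal{A}_y}\leq \varepsilon/2$ for all $t\geq 0$, then use the local ISS bound on the $x$-subsystem with this small input to bound $|x(t)|_{\mathcal{A}_x}\leq \varepsilon/2$ for $|x(0)|_{\mathcal{A}_x}$ sufficiently small.

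For almost global attractivity, fix any $(x_0,y_0)\in\mathcal{M}\times\mathcal{N}$. By GAS of the $y$-subsystem, the trajectory $y(\cdot;y_0)$ is bounded and converges to $\mathcal{A}_y$, hence is admissible as an input to the $x$-subsystem. Definition \ref{def:A_ISS} then furnishes a zero-measure set $\aleph_{y(\cdot;y_0)}\subset\mathcal{M}$ such that, for $x_0\notin\aleph_{y(\cdot;y_0)}$,
\begin{equation*}
\limsup_{t\to\infty}|x(t)|_{\mathcal{A}_x}\leq \gamma\bigl(\|y(\cdot;y_0)\|_{\infty}\bigr).
\end{equation*}
To upgrade this to $\limsup_{t\to\infty}|x(t)|_{\mathcal{A}_x}=0$, I would use the standard time-shift trick: for each $T>0$, applying almost global ISS to $y|_{[T,\infty)}$ with initial state $x(T)$ yields $\limsup|x(t)|_{\mathcal{A}_x}\leq \gamma\bigl(\sup_{t\geq T}\|y(t)\|\bigr)$, valid for $x(T)$ outside some null set. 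Since the flow of the $x$-subsystem is a local diffeomorphism (by local Lipschitzness of $f$), pulling this null set back to time $t=0$ preserves Lebesgue-null-ness, and taking a countable union over $T\in\mathbb{N}$ leaves a null exceptional set. Because $y(t)\to\mathcal{A}_y$ forces the right-hand side to $0$ as $T\to\infty$, this gives $x(t)\to \mathcal{A}_x$ for almost every $x_0$.

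Finally, one argues in the product: the set of $(x_0,y_0)$ for which the cascade fails to converge to $\mathcal{A}$ has zero $y_0$-sections in the Lebesgue sense uniformly in $y_0$, so by Fubini it is of zero product Lebesgue measure on $\mathcal{M}\times\mathcal{N}$. The main obstacle I anticipate is the technical handling of the time-shift argument on a manifold, namely verifying that the pull-back of the exceptional set under the (generally nonlinear) $x$-flow preserves null-ness and that the proper interpretation of $\|y\|_{\infty}$ as an input norm makes $\gamma(\sup_{t\geq T}\|y(t)\|)\to 0$; this may require relating $\|y(t)\|$ to $|y(t)|_{\mathcal{A}_y}$ via a local chart or shifting coordinates so that $\mathcal{A}_y$ plays the role of the origin in $\mathbb{R}^m$. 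Once these points are settled, the result reduces to the Euclidean cascade proof in \cite{angeli2004almost}.
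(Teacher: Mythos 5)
Your proposal reconstructs the cascade argument of \cite[Theorem 2]{angeli2004almost}, which is exactly what the paper relies on: the paper offers no proof of this lemma beyond citing that result and observing that GAS of the $y$-subsystem is a special case of the AGAS hypothesis there, so your outline (local ISS near $\mathcal{A}_x$ for stability, the asymptotic-gain plus time-shift argument for almost global attractivity, and a measure-theoretic argument in the product) matches the intended route. One small correction: a locally Lipschitz vector field yields a locally bi-Lipschitz flow (via Gr\"onwall), not a local diffeomorphism, but bi-Lipschitz maps already preserve Lebesgue-null sets, which is all the time-shift step requires; and, as you note, the input norm must be read as the deviation of $y$ from $\mathcal{A}_y$, which is automatic in the paper's applications since there $\mathcal{A}_y=\{0\}$.
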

Note that the cascaded system in Lemma \ref{lem:AGAS_conti} is a special case of that in \cite[Theorem 2]{angeli2004almost}, since $y$-subsystem \eqref{eqn:dot_y} is GAS instead of AGAS as in \cite[Theorem 2]{angeli2004almost}. Motivated by Lemma \ref{lem:AGAS_conti}, the following lemma provides AGAS for a cascaded hybrid system.
\begin{lem}\label{lem:AGAS_hybrid}
	Consider the following cascaded hybrid system
	\begin{align} \label{eqn:cascaded_hybrid_system}
		\arraycolsep=2.4pt\def\arraystretch{1.1}
		\underbrace{\begin{array}{ll}
				\dot{x} & = f(x,y)\\
				\dot{y} & = g(y) 
		\end{array}}_{(x,y)\in \mathcal{F}} ~
		\underbrace{\begin{array}{ll}
				x^+ &=  x \\
				y^+ &\in g'(y) 
		\end{array}}_{(x,y)\in \mathcal{J}}  
	\end{align}
	where $  (x,y) \in \mathcal{M} \times \mathcal{N}$, the functions $f$ and $g$ are described as per Lemma \ref{lem:AGAS_conti} and the   map $g': \mathcal{N} \rightrightarrows \mathcal{N}$. Suppose that the hybrid system \eqref{eqn:cascaded_hybrid_system} satisfies the hybrid basic conditions and  
	\begin{itemize}	
		\item [1)] the $x$-subsystem   is almost globally ISS with respect to $\mathcal{A}_x \subset \mathcal{M}$ and input $y$,
		\item [2)] the $y$-subsystem  is GAS at $\mathcal{A}_y \subset  \mathcal{N}$,
		\item [3)] every maximal solution to \eqref{eqn:cascaded_hybrid_system} is complete and $t\to +\infty$.
	\end{itemize}
	Then, the cascaded hybrid system \eqref{eqn:cascaded_hybrid_system} is AGAS at $\mathcal{A}$.
\end{lem}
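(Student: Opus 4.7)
The plan is to reduce this hybrid cascade to the continuous-time situation handled by Lemma~\ref{lem:AGAS_conti} by exploiting the fact that only $y$ jumps, while $x$ evolves purely via flow ($x^+ = x$). Hypothesis~3) gives completeness of every maximal solution together with $t\to +\infty$, so the projection $t\mapsto y(t,j(t))$ is well-defined on $[0,+\infty)$, piecewise-continuous between consecutive jumps (hence measurable), and by GAS of the $y$-subsystem in hypothesis~2) it is bounded and satisfies $|y(t,j(t))|_{\mathcal{A}_y}\to 0$ as $t\to +\infty$.

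Along any solution, the $x$-state solves the non-autonomous ODE $\dot x = f(x,\,y(t,j(t)))$ driven by a locally essentially bounded, measurable exogenous signal in the sense of Definition~\ref{def:A_ISS}. I would invoke the almost global ISS property (hypothesis~1) to produce a zero-measure exceptional set $\aleph_y\subset\mathcal{M}$ such that
$$
\limsup_{t\to\infty}\,|x(t,x_0,y)|_{\mathcal{A}_x} \leq \gamma(\|y\|_\infty),\qquad \forall\, x_0\in\mathcal{M}\setminus\aleph_y .
$$
To upgrade this asymptotic bound into actual convergence of $|x(t)|_{\mathcal{A}_x}$ to zero, I would use a standard time-shift argument: for each $T>0$, re-apply the ISS estimate on the shifted input $y_T(\cdot):=y(\cdot+T,\,j(\cdot+T))$ starting from $x(T)$, obtaining $\limsup_{t\to\infty}|x(t)|_{\mathcal{A}_x}\leq \gamma(\|y_T\|_\infty)$; since $y$ converges to $\mathcal{A}_y$ we have $\|y_T\|_\infty\to 0$ as $T\to +\infty$, and hence the left-hand side must vanish.

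Stability of $\mathcal{A}:=\mathcal{A}_x\times\mathcal{A}_y$ follows by combining local asymptotic stability of $\mathcal{A}_x$ (built into Definition~\ref{def:A_ISS}) with stability of $\mathcal{A}_y$ (built into GAS): a sufficiently small initial $y$-deviation keeps $\|y\|_\infty$ below any desired threshold via the class-$\mathcal{K}$ gain $\gamma$, and then the local stability margin on the $x$-side keeps $|x|_{\mathcal{A}_x}$ within the target $\epsilon$-ball. Paired with the almost-global attractivity established above, this yields AGAS at $\mathcal{A}$.

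The main obstacle I anticipate is the measure-theoretic bookkeeping in the time-shift step: one must check that, over a countable sequence $T_n\to +\infty$, the union of the shifted exceptional sets $\aleph_{y_{T_n}}$ pulled back through the forward flow to initial conditions in $\mathcal{M}\times\mathcal{N}$ still has Lebesgue measure zero. This relies on the forward $x$-flow at each finite time $T_n$ being a locally Lipschitz diffeomorphism onto its image (and hence sending null sets to null sets), together with the hybrid basic conditions assumed in the statement, which rule out pathological solution behavior of the $y$-component. A cleaner alternative would be to transplant the proof scheme of \cite[Theorem~2]{angeli2004almost} directly into the hybrid framework of \cite{goebel2012hybrid} by establishing a hybrid convergent-input/convergent-state implication from almost global ISS, and then applying it to the $x$-subsystem with $y$ as the convergent driving signal.
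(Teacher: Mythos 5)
Your proposal is correct and follows essentially the same route as the paper, which simply defers to the proof of \cite[Theorem 2]{angeli2004almost} after observing that items 2) and 3) give $\limsup_{t\to+\infty}|y(t,j)|_{\mathcal{A}_y}=0$ and that $x$ never jumps, so the $x$-dynamics reduce to a continuous ODE driven by a convergent, measurable signal. The details you supply --- the time-shift argument to upgrade the asymptotic gain bound to convergence, and the measure-zero bookkeeping via the null-set-preserving time-$T$ flow maps --- are precisely the ingredients of Angeli's proof that the paper leaves implicit.
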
 
	The proof of Lemma \ref{lem:AGAS_hybrid} can be easily obtained from the proof of \cite[Theorem 2]{angeli2004almost} providing that $y$-subsystem is GAS at $\mathcal{A}_y$ with $\limsup_{t\to + \infty}|y(t,j)|_{\mathcal{A}_y} =0$ from item 2) and 3).

\section{Problem Statement}\label{sec:problem}

The kinematics   of a rigid body on $SO(3)$ are given by
\begin{align}
	\dot{R} &= R\omega^\times \label{eqn:dynamics_R}  
\end{align}
where $R\in SO(3)$ denotes the attitude of the rigid body, and $\omega\in \mathbb{R}^3$ denotes the angular velocity of the rigid body expressed in body  frame.
 
The measurement model of the linear acceleration $a$ obtained, for instance, from accelerometer, is given as
\begin{equation}
	a =  R\T (\dot{v}-g) \label{eqn:def_a} 
\end{equation}
where $v$ denotes the linear velocity expressed in the inertial frame, and $g$ is the gravity vector known in the inertial frame. We assume that the body-fixed frame angular velocity $\omega$ and the linear acceleration $a$ are continuously measurable. Moreover, we assume that the body frame linear velocity $v_m = R\T v$ is available for measurement. 
Consider a family of $N\geq 1$ constant and known inertial vectors, denoted by $r_i\in \mathbb{R}^3$ for all $i\in  \{1,2,\cdots,N\}$. The measurements of the inertial vectors expressed in the body frame are modeled as
\begin{equation}
	b_i  = R \T r_i,\quad \forall   i\in \{1,2,\cdots,N\}.  \label{eqn:def_b_i}
\end{equation}

The objective of this work is to design a nonlinear continuous attitude estimation scheme on $SO(3)$ for system \eqref{eqn:dynamics_R} with AGAS guarantees in terms of the continuous measurements $y_1=(\omega,a)$ and the intermittent measurements $y_2=(v,r_1,r_2,\dots,r_N)$.

\section{Main Results}

\subsection{Observer Design Using Continuous Measurements}
In this subsection, we consider the case that all the measurements are continuous. Let $\hat{R}\in SO(3), \hat{v}, \hat{g}\in \mathbb{R}^3$ denote the estimates of the attitude, linear velocity and gravity direction, respectively. We propose the following continuous observer on the manifold $SO(3)\times \mathbb{R}^6$:
\begin{subequations} \label{eqn:Cont-Obsv}
	\begin{align}
		\dot{\hat{R}} & = \hat{R}(\omega + k_{o}\hat{R}\T \sigma_R )^\times \label{eqn:Cont-Obsv-R}\\
		\dot{\hat{v}} & = k_o \sigma_R^\times \hat{v} + \hat{g} +  \hat{R} a+ k_v(\hat{R}v_m-\hat{v}) \label{eqn:Cont-Obsv-v} \\
		\dot{\hat{g}} & = k_o \sigma_R^\times \hat{g} + k_g(\hat{R}v_m-\hat{v})  \label{eqn:Cont-Obsv-g}
	\end{align}
\end{subequations}
where $k_o, k_v, k_g >0 $ and the innovation term $\sigma_R$ is designed as
\begin{align}
	\sigma_R =  -\sum_{i=1}^N \rho_i  r_i^\times \hat{R}  {b}_i - \rho_{N+1} g^\times \hat{g} \label{eqn:def_sigma_R}
\end{align}
with $\rho_i\geq 0$ for all $i\in\{1,2,\dots,N+1\}$.  
Note that the dynamics of $\hat{g}$ designed in \eqref{eqn:Cont-Obsv-g}, together with \eqref{eqn:Cont-Obsv-v}, ensure that $\hat{g}$  converges exponentially to $\hat{R} R\T g$, which allows to consider the gravity direction $g$  (known in the inertial frame) as an additional inertial vector in the design of the innovation term $\sigma_R$.

Note also that the proposed observer   \eqref{eqn:Cont-Obsv} generalizes two existing architectures for the attitude estimation. In particular, if there exists at least two non-collinear inertial vectors, the resulting observer \eqref{eqn:Cont-Obsv-R} with $\rho_{N+1} = 0$ coincides with the nonlinear complementary filter proposed in \cite{mahony2008nonlinear}, \ie, 
\begin{equation} \textstyle 
	\dot{\hat{R}}   = \hat{R}(\omega + k_{o}\hat{R}\T  \sum_{i=1}^N \rho_i   (\hat{R}  {b}_i)^\times r_i )^\times. \label{eqn:CF}
\end{equation}
If the measurements of the linear velocity $v_m$ are available, selecting $\rho_{N+1} >0$ leads to a velocity-aided attitude observer, 
which handles applications with non-negligible linear accelerations, where the accelerometer does not provide the body frame measurements of the gravity vector.

Define the attitude estimation error $\tilde{R} = R\hat{R}\T$, and $\zeta = [\tilde{v}\T,\tilde{g}\T]\T\in \mathbb{R}^6$ with $\tilde{v} =  v- R\hat{R}\T\hat{v} $ and $\tilde{g} = g- R \hat{R}\T\hat{g} $. From \eqref{eqn:psiAR}, \eqref{eqn:def_b_i} and the fact  $  \hat{g} = \tilde{R}\T (g -  \tilde{g})$, the innovation term $\sigma_R$ defined in \eqref{eqn:def_sigma_R}  can be rewritten as
\begin{align}
	\sigma_R &= -\sum_{i=1}^N \rho_i  r_i^\times \tilde{R}\T r_i - \rho_{N+1} g^\times\tilde{R}\T  (g-  \tilde{g})  \nonumber\\ 
	&= \psi(Q\tilde{R}) + \rho_{N+1} g^\times  \tilde{R}\T \tilde{g}\nonumber \\
	&= \psi(Q\tilde{R}) + \Gamma(\tilde{R}) \zeta \label{eqn:sigma_R_error}
\end{align}
where the matrix $Q$ is defined as
\begin{align}
	Q:=\sum_{i=1}^N\rho_i r_i r_i\T + \rho_{N+1} gg\T \in \mathbb{R}^{3\times 3} \label{eqn:def_Q}  
\end{align}
and $\Gamma(\tilde{R}):= [0_{3\times 3},\rho_{N+1} g^\times  \tilde{R}\T]\in \mathbb{R}^{3\times  6}$. It is clear that $\|\Gamma(\tilde{R})\|_F   =  \sqrt{2}\rho_{N+1} \|g\| $ for all $\tilde{R}\in SO(3)$.
 
\begin{lem}\label{lem:Q}
	Consider the matrix $Q$ defined in \eqref{eqn:def_Q} with $\rho_i>0, \forall i\in\{1,2,\dots,N+1\}$. Then, the matrix $\bar{Q} := \tr(Q)I_3-Q$  is positive definite if one of the following statements holds:
	\begin{itemize}
		\item [1)] $N\geq 2$ and there exist at least two non-collinear inertial vectors.
		\item [2)] $N\geq 1$ and there exists at least one inertial vector, which is non-collinear to the gravity vector $g$.
	\end{itemize}
\end{lem}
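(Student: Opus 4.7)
The plan is to show positive definiteness of $\bar{Q}$ directly via a quadratic form argument, exploiting that $\bar{Q} = \tr(Q)I_3 - Q$ and that $Q$ is a positive-weighted sum of rank-one PSD matrices. For any unit vector $u\in\mathbb{S}^2$, I would write
\[
u\T \bar{Q} u = \tr(Q) - u\T Q u = \sum_{i=1}^N \rho_i\bigl(\|r_i\|^2 - (u\T r_i)^2\bigr) + \rho_{N+1}\bigl(\|g\|^2 - (u\T g)^2\bigr),
\]
using $\tr(r_ir_i\T)=\|r_i\|^2$ and similarly for $gg\T$. Each summand is non-negative by the Cauchy--Schwarz inequality, so $\bar{Q}\succeq 0$ is immediate.

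The key step is then to identify when $u\T\bar{Q}u = 0$. Because every $\rho_i$ is strictly positive, vanishing of the sum forces equality in Cauchy--Schwarz for every term simultaneously, i.e.\ $u$ must be collinear with each of $r_1,\dots,r_N$ and with $g$. Under condition~1), the collection $\{r_1,\dots,r_N\}$ contains two non-collinear vectors, so no single $u\in\mathbb{S}^2$ can be collinear with both; under condition~2), there is some $r_i$ non-collinear with $g$, again preventing such a $u$ from existing. In either case, $u\T\bar{Q}u > 0$ for every $u\in\mathbb{S}^2$, and since $\bar{Q}$ is symmetric this gives $\bar{Q}\succ 0$.

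I do not anticipate any real obstacle: the argument is essentially a Cauchy--Schwarz equality-case analysis, and the two conditions are precisely what is needed to rule out the degenerate case where all the weighted measurement vectors lie on a single line through the origin. A cleaner alternative formulation, if one prefers, is to observe that for the PSD matrix $Q$ with eigenvalues $0\le \lambda_1\le\lambda_2\le\lambda_3$, the eigenvalues of $\bar{Q}$ are $\lambda_2+\lambda_3,\,\lambda_1+\lambda_3,\,\lambda_1+\lambda_2$, so $\bar{Q}\succ 0$ is equivalent to $\mathrm{rank}(Q)\ge 2$; each of the two conditions ensures the range of $Q$ contains two linearly independent vectors, giving the required rank bound. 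I would present the direct quadratic-form proof for brevity.
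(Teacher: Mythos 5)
Your proof is correct and amounts to a self-contained version of the paper's argument: the paper simply rewrites $\bar{Q} = -\sum_{i=1}^N \rho_i (r_i^\times)^2 - \rho_{N+1}(g^\times)^2$ and delegates the rest to a cited lemma, and since $u\T\left(-(r^\times)^2\right)u = \|r\times u\|^2 = \|r\|^2 - (u\T r)^2$ for unit $u$, your Cauchy--Schwarz quadratic-form computation with its equality-case analysis is exactly the content of that cited result. The only cosmetic caveat is that ``collinear with the zero vector'' is vacuous, but both hypotheses force the relevant vectors to be nonzero, so your equality analysis goes through.
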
 
	The proof of Lemma \ref{lem:Q} can be easily conducted from \cite[Lemma 2]{tayebi2013inertial} using the fact that the matrix $\bar{Q}$ can be explicitly rewritten as  $\bar{Q}   =   -\sum_{i=1}^N\rho_i (r_i^\times)^2  - \rho_{N+1} (g^\times)^2$ from \eqref{eqn:def_Q}. Moreover, under Lemma \ref{lem:Q} it is always possible to tune the scalar $\rho_i>0$ for all $i\in \{1,2,\dots,N+1\}$ such that the positive definite matrix $\bar{Q}$ has three distinct eigenvalues.

From \eqref{eqn:dynamics_R}, \eqref{eqn:def_a},  \eqref{eqn:Cont-Obsv} and \eqref{eqn:sigma_R_error}, one obtains the following closed-loop system: 
\begin{subequations}\label{eqn:Cont-Closed-Loop}
	\begin{align}
		\dot{\tilde{R}} & = \tilde{R}( -k_o\psi(Q\tilde{R}) - k_o\Gamma(\tilde{R})\zeta )^\times \label{eqn:Cont-Closed-Loop-R}\\
		\dot{\zeta} & = (A-KC)\zeta  \label{eqn:Cont-Closed-Loop-zeta}
	\end{align}
\end{subequations}
with matrices $A,K,C$ given as
\begin{align}
	A = \begin{bmatrix}
		0_3 & I_3\\
		0_3 & 0_3
	\end{bmatrix}, C = \begin{bmatrix}
	I_3 & 0_3
\end{bmatrix}, K=\begin{bmatrix}
	k_vI_3 \\ k_g I_3
\end{bmatrix} \label{eqn:A-C-K}.
\end{align}
Note that the closed-loop system \eqref{eqn:Cont-Closed-Loop} evolves on the manifold $SO(3) \times \mathbb{R}^6$, and one can easily verify that $(I_3,0)$ is one of the equilibrium of system \eqref{eqn:Cont-Closed-Loop}.

\begin{prop}\label{prop:AISS-SO3}
	Let $\mathcal{D}_u$ be a closed and bounded subset of $\mathbb{R}^m$.  Consider the system 	
	\begin{align}
		\dot{\tilde{R}}= \tilde{R} (  -k_o\psi(Q\tilde{R}) + \Gamma(\tilde{R}) u  ) ^\times  \label{eqn:dot_tildeR}
	\end{align}
	with state $\tilde{R}\in SO(3)$, input $u\in \mathcal{D}_u$ and $k_o>0$. Suppose that $\bar{Q} = \tr(Q)I_3 -Q$ is positive definite with three distinct eigenvalues, and there exists a constant $c_\Gamma>0$ such that $\|\Gamma(X_1)-\Gamma(X_2)\|_F \leq c_\Gamma\|X_1-X_2\|_F$ for all $X_1,X_2\in \mathbb{R}^{3\times 3}$.  
	Then, system   \eqref{eqn:dot_tildeR} is almost globally ISS with respect to the equilibrium  $I_3$ and input $u$. 
\end{prop}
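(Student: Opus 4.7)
The strategy is to construct a Wahba-style Lyapunov function on $SO(3)$, derive an ISS-type dissipation inequality from it, and then lift this to almost global ISS in the sense of Definition~\ref{def:A_ISS} by exploiting the isolated-saddle structure of the non-identity critical points of the unforced flow.

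First I would take $V(\tilde R) := \tr((I_3-\tilde R)Q)$, which is smooth and nonnegative on $SO(3)$. Using the identity $\langle\langle A,x^\times\rangle\rangle = 2x\T\psi(A)$ (applied with $A = (Q\tilde R)\T$, noting $\psi(M\T) = -\psi(M)$), a direct calculation along \eqref{eqn:dot_tildeR} produces
\begin{equation*}
\dot V = 2\psi(Q\tilde R)\T\bigl(-k_o\psi(Q\tilde R) + \Gamma(\tilde R)u\bigr).
\end{equation*}
The Lipschitz assumption on $\Gamma$ together with compactness of $SO(3)$ gives $c_\Gamma^\star := \sup_{\tilde R \in SO(3)}\|\Gamma(\tilde R)\|_F < \infty$; Young's inequality with weight $k_o$ then yields the dissipation estimate
\begin{equation*}
\dot V \;\le\; -k_o\,\|\psi(Q\tilde R)\|^2 \;+\; \tfrac{(c_\Gamma^\star)^2}{k_o}\,\|u\|^2.
\end{equation*}
A Taylor expansion $\tilde R = I_3 + \epsilon^\times + O(\|\epsilon\|^2)$ together with \eqref{eqn:psiAR} and the spectral form of $\bar Q$ gives $\psi(Q\tilde R) = \tfrac{1}{2}\bar Q\epsilon + O(\|\epsilon\|^2)$ and $V = \tfrac{1}{2}\epsilon\T\bar Q\epsilon + O(\|\epsilon\|^3)$; positive-definiteness of $\bar Q$ therefore gives $\|\psi(Q\tilde R)\|^2 \ge c_1 V$ in a neighborhood of $I_3$ for some $c_1 > 0$, which via the dissipation inequality yields local exponential stability of $I_3$ for $u\equiv 0$, i.e.\ the local asymptotic stability part of Definition~\ref{def:A_ISS}.

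Second, for the almost-global part, I would use that $\bar Q$ (and hence $Q$) has three distinct eigenvalues to show, via identity \eqref{eqn:psiAR}, that $\psi(Q\tilde R)=0$ has exactly four solutions on $SO(3)$: $I_3$ and the rotations by $\pi$ about the three distinct eigenvectors of $\bar Q$. For the unforced flow, a linearization at each of the three non-identity critical points exhibits at least one positive eigenvalue, so these are saddles with stable manifolds of dimension strictly less than $\dim SO(3) = 3$, hence of zero Lebesgue measure. Away from any open neighborhood $\mathcal N$ of the four-point critical set, $\|\psi(Q\tilde R)\|^2$ is bounded below by some $\beta_{\mathcal N}>0$; the dissipation inequality then forces every trajectory either into $\mathcal N$ or into the sublevel set $\{V \le (c_\Gamma^\star)^2\|u\|_\infty^2/(k_o^2\beta_{\mathcal N})\}$ in finite time, which in turn yields the desired class-$\mathcal K$ ISS estimate with a gain $\gamma(r) = c\,r$, $c$ proportional to $c_\Gamma^\star/\sqrt{k_o}$, whenever the trajectory does not converge to one of the three saddles.

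The main obstacle is to argue that, for every locally essentially bounded input $u$, the set of initial conditions whose trajectory is trapped near one of the three non-identity equilibria has zero Lebesgue measure. For $u\equiv 0$ this is immediate from the lower-dimensional stable-manifold argument above; for general $u$ it requires a robustness-of-saddles argument under bounded perturbations. I would invoke the machinery of \cite{angeli2004almost}: the assembled ingredients---local AS at $I_3$, forward completeness and uniform boundedness (automatic since $SO(3)$ is compact), the global dissipation inequality above, and hyperbolicity of the non-identity equilibria of the unperturbed flow---match its hypotheses, and its conclusion produces the zero-measure exceptional set $\aleph_u$ and the class-$\mathcal K$ gain $\gamma$ required by Definition~\ref{def:A_ISS}.
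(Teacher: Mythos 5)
Your proposal follows essentially the same route as the paper: the Wahba-type Lyapunov function $V(\tilde R)=\tr(Q(I_3-\tilde R))$, the derivative identity $\dot V = 2\psi(Q\tilde R)\T(-k_o\psi(Q\tilde R)+\Gamma(\tilde R)u)$, the characterization of the critical set as $\{I_3\}\cup\{\mathcal{R}_a(\pi,v):v\in\mathcal{E}(\bar Q)\}$ with the three non-identity points isolated and exponentially unstable, and the local expansion $V=\tfrac12\epsilon\T\bar Q\epsilon+O(\|\epsilon\|^3)$, $\psi(Q\tilde R)=\tfrac12\bar Q\epsilon+O(\|\epsilon\|^2)$ giving local asymptotic stability of $I_3$. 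All of these computations check out and match the paper's Appendix A.

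The one substantive problem is the final step, which you correctly identify as "the main obstacle": showing that for \emph{every} bounded input $u$ the exceptional set $\aleph_u$ has zero Lebesgue measure. You propose to close this by invoking "the machinery of \cite{angeli2004almost}", listing as its hypotheses local AS at $I_3$, ultimate boundedness, the dissipation inequality, and hyperbolicity of the non-identity equilibria. But \cite{angeli2004almost} does not contain a theorem of that shape: its sufficient conditions for almost global ISS go through density (dual Lyapunov) functions, and its main structural result is the cascade theorem (Lemma~\ref{lem:AGAS_conti} here). The result whose hypotheses you have actually assembled --- AGAS of the unforced system with all undesired equilibria exponentially unstable and isolated, plus the ultimate boundedness property, implies almost global ISS --- is \cite[Propositions~2 and~3]{angeli2010stability} (Angeli--Praly), which is precisely what the paper cites and verifies (Assumptions A0--A2 there, plus the explicit bound on $\tfrac{d}{dt}|\tilde R|_I^2$ for ultimate boundedness). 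As written, your proof outsources the entire measure-zero robustness argument to a reference that does not supply it; substituting the Angeli--Praly result repairs the proof and makes it coincide with the paper's.
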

\begin{proof} 
	See Appendix \ref{sec:AISS-SO3}
\end{proof}
It is worth to point out that Proposition \ref{prop:AISS-SO3} implies that the nonlinear complementary filter \eqref{eqn:CF} proposed in \cite{mahony2008nonlinear} is almost globally ISS with respect to $I_3$ and some bounded disturbance. The key of the proof of Proposition \ref{prop:AISS-SO3} relies on the fact that system \eqref{eqn:dot_tildeR} (with $u\equiv 0$) is AGAS and has exponentially unstable isolated equilibria \cite{angeli2010stability}. A similar result on almost global ISS of system \eqref{eqn:dot_tildeR}, with $Q = I_3$ and some high gain $k_o$ depending on the bound of the input $u$, can be found in \cite{vasconcelos2011combination} using a combination of Lyapunov and density functions.

\begin{thm}\label{thm:conti}
	Consider the closed-loop system \eqref{eqn:Cont-Closed-Loop} with \eqref{eqn:A-C-K}. Choose the gain parameters as $k_o,k_v,k_g>0$. Then, the equilibrium  $(I_3,0)$ of system \eqref{eqn:Cont-Closed-Loop} is AGAS.
\end{thm}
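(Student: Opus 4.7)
The plan is to apply the cascade result in Lemma \ref{lem:AGAS_conti} directly to the closed-loop system \eqref{eqn:Cont-Closed-Loop}, since the $\zeta$-subsystem \eqref{eqn:Cont-Closed-Loop-zeta} is linear and completely decoupled from $\tilde R$, while the $\tilde R$-subsystem \eqref{eqn:Cont-Closed-Loop-R} coincides with \eqref{eqn:dot_tildeR} under the input substitution $u = -k_o \zeta$.

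First, I would establish that the $\zeta$-subsystem is globally asymptotically stable at the origin. Using the block structure in \eqref{eqn:A-C-K}, one has $A - KC = \begin{bmatrix} -k_v I_3 & I_3 \\ -k_g I_3 & 0_3 \end{bmatrix}$, whose spectrum consists of the roots of $s^{2} + k_v s + k_g = 0$, each with multiplicity three. For $k_v, k_g > 0$, Routh--Hurwitz yields negative real parts, so $A - KC$ is Hurwitz and the $\zeta$-subsystem is in fact globally exponentially stable, hence GAS at $\zeta = 0$.

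Second, I would check the hypotheses of Proposition \ref{prop:AISS-SO3} for the $\tilde R$-subsystem viewed as driven by $u = -k_o \zeta$. The positive definiteness of $\bar Q$ with three distinct eigenvalues follows from Lemma \ref{lem:Q} together with the observation recorded just after that lemma that the scalars $\rho_i$ can always be tuned to separate the spectrum. The Lipschitz bound on $\Gamma$ is immediate from $\Gamma(R) = [0_{3\times 3},\ \rho_{N+1} g^{\times} R\T]$: using $\|AB\|_F \leq \|A\|_{\mathrm{op}} \|B\|_F$ one obtains $\|\Gamma(X_1) - \Gamma(X_2)\|_F \leq \rho_{N+1}\|g\|\, \|X_1 - X_2\|_F$, so one may take $c_\Gamma := \rho_{N+1}\|g\|$. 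Since every trajectory of the exponentially stable $\zeta$-subsystem is globally bounded by a multiple of $\|\zeta(0,0)\|$, the resulting input $u(t) = -k_o \zeta(t)$ is essentially bounded and takes values in a compact $\mathcal{D}_u \subset \mathbb{R}^{6}$, which is what the proposition requires. Proposition \ref{prop:AISS-SO3} then yields almost global ISS of the $\tilde R$-subsystem with respect to $I_3$ and input $\zeta$.

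Finally, the two items required by Lemma \ref{lem:AGAS_conti} are exactly what we have established, with $\mathcal{A}_x = \{I_3\}$ on $\mathcal{M} = SO(3)$ and $\mathcal{A}_y = \{0\}$ on $\mathcal{N} = \mathbb{R}^{6}$, so AGAS of $\mathcal{A} := \{I_3\} \times \{0\}$ for \eqref{eqn:Cont-Closed-Loop} follows immediately. The main obstacle for this theorem is not really in this proof: essentially all the nontrivial work has been front-loaded into Proposition \ref{prop:AISS-SO3}, whose almost global ISS statement handles the $\psi(Q\tilde R)$ feedback on $SO(3)$; once that is in hand, Theorem \ref{thm:conti} is a direct application of the cascade lemma.
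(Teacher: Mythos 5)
Your proposal is correct and follows essentially the same route as the paper: establish global exponential stability of the $\zeta$-subsystem \eqref{eqn:Cont-Closed-Loop-zeta} from the Hurwitz property of $A-KC$ (the paper does this via a Lyapunov equation rather than your explicit characteristic polynomial $(s^2+k_v s+k_g)^3$, an immaterial difference), invoke Proposition \ref{prop:AISS-SO3} for almost global ISS of the $\tilde R$-subsystem, and conclude by the cascade result of Lemma \ref{lem:AGAS_conti}. Your explicit verification of the Lipschitz bound $c_\Gamma=\rho_{N+1}\|g\|$ and of the compactness of the input range is a small added nicety the paper leaves implicit.
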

\begin{proof}
See Appendix \ref{sec:thm1}
\end{proof}

\begin{rem}
	The stability analysis of system \eqref{eqn:Cont-Closed-Loop} relies on the results of Lemma \ref{lem:AGAS_conti} and Proposition \ref{prop:AISS-SO3}.
	Note that the observer \eqref{eqn:Cont-Obsv} can be reduced to 
	\begin{subequations} \label{eqn:Cont-Obsv2}
		\begin{align}
			\dot{\hat{R}} & = \hat{R}(\omega + k_{o}\hat{R}\T \sigma_R )^\times \label{eqn:Cont-Obsv-R2}\\
			\dot{\hat{v}} & = k_o \sigma_R^\times \hat{v}  +  \hat{R} a+ k_v(\hat{R}v_m-\hat{v}) 
		\end{align}
	\end{subequations}
	where the innovation term $\sigma_R$ is given in the same form of \eqref{eqn:def_sigma_R} with $\hat{g} = k_v(\hat{R}v_m-\hat{v})$. 
	 Letting $\tilde{g}   = g-R\hat{R}\T \hat{g}$, one can show that $\tilde{g}$ converges globally exponentially to 0 (\ie, $\dot{\tilde{g}} = -k_v \tilde{g}$) and  $\sigma_R  
	 = \psi(Q\tilde{R}) + \rho_{N+1} g^\times \tilde{R}\T \tilde{g}$. Therefore, AGAS for the reduced observer \eqref{eqn:Cont-Obsv2} is also guaranteed using the similar steps as in the proof of Theorem \ref{thm:conti}. Note also that the reduced observer \eqref{eqn:Cont-Obsv2} has a similar form as in \cite{hua2016stability}. The main drawback of these observers is that the noisy measurements of the linear velocity $v_m$ appear directly in the dynamics of $\hat{R}$ through the innovation term $\sigma_R$.

\end{rem}

\subsection{Observer Design Using Intermittent Measurements}
In practical applications, inertial vector measurements and velocity measurements are often obtained at much lower rates with respect to the IMU measurements. This motivates us to redesign the previous continuous-time observer in terms of  intermittent inertial vectors and linear velocity measurements. In this case, the measurements of the inertial vectors and the linear velocity are available at some time instants $\{t_k\}_{k\in \mathbb{N}_{>0}}$.
\begin{assum}\label{assum:intermittent}
	The time sequence $\{t_k\}_{k\in \mathbb{N}_{>0}}$ is strictly increasing and there exist two constants $0<T_m \leq T_M$ such that $0 \leq t_{1}  \leq T_M$ and $T_m \leq t_{k+1}-t_k \leq T_M, \forall k\in \mathbb{N}_{>0} $. 
\end{assum}
Note that in the particular case where $T_m=T_M=T$, the sampling is periodic with a regular sampling period $T$.

Let $\hat{r}_i, i\in\{1,\dots,N\}$ be the estimate of the inertial vector $r_i$. We propose the following hybrid attitude observer on manifold $SO(3)\times \mathbb{R}^{3N+6}$: 
\begin{align} \arraycolsep=2.9pt\def\arraystretch{1.1}
		\underbrace{\begin{array}{ll}
				\dot{\hat{R}} & = \hat{R}(\omega + k_o\hat{R}\T \sigma_R)^\times \\
				\dot{\hat{v}} & =  k_o \sigma_R^\times \hat{v} + \hat{g} +  \hat{R} a  \\
				\dot{\hat{g}} & = k_o \sigma_R^\times \hat{g} \\				
				\dot{\hat{r}}_i & = k_o \sigma_R^\times \hat{r}_i 
		\end{array}}_{t \in [t_k, t_{k+1}],~    k\in \mathbb{N} } 
		\underbrace{\begin{array}{ll}
				\hat{R}^+ & = \hat{R} \\
				\hat{v}^+ & =   \hat{v}+    k_v(\hat{R}v_m-\hat{v}) \\
				\hat{g}^+ & =  \hat{g} + k_g(\hat{R}v_m-\hat{v}) \\				
				\hat{r}_i^+ & =  \hat{r}_i +    k_{r} (\hat{R}{b}_i-\hat{r}_i)
		\end{array}}_{t \in \{t_k\},~  k\in \mathbb{N} }  
	\label{eqn:hybrid_observer1}
\end{align}
for all $i\in \{1,\dots,N\}$, where $k_o,k_v,k_r,k_g>0$ and the innovation term $\sigma_R$ is designed as
\begin{align}
	\sigma_R =  -\sum_{i=1}^N \rho_i  r_i^\times \hat{r}_i  - \rho_{N+1} g^\times \hat{g} \label{eqn:def_sigma_R2}
\end{align}
with $\rho_i\geq 0$ for all $i\in\{1,2,\dots,N+1\}$. Note that the estimated states $\hat{v}, \hat{g}$ and $\hat{r}_i$ are continuously updated through integration using the continuous angular velocity and linear acceleration measurements and discreetly updated upon the arrival of the intermittent linear velocity and vector measurements. Moreover, it is clear that the estimated attitude $\hat{R}$ from \eqref{eqn:hybrid_observer1} is continuous (not necessary differentiable).  

To capture the behavior of the event-triggered system \eqref{eqn:hybrid_observer1}, a virtual timer $\tau$, motivated from \cite{ferrante2016state,wang2020nonlinear}, is considered with the following hybrid dynamics:
\begin{equation}
	\begin{cases}
		\dot{\tau} ~~= -1, & \tau\in [0,T_M] \\
		\tau^+ \in [T_m, T_M], & \tau\in \{0\}
	\end{cases} \label{eqn:tau}
\end{equation}
with $\tau(0,0)\in [0,T_M]$. Note that the virtual timer $\tau$ decreases to zero continuously, and upon reaching zero it is automatically reset to a value, between $T_m$ and $T_M$, which represents the arrival time of next measurements. 
With this additional state $\tau$, the time-driven sampling events can be described as state-driven events, which results in an autonomous hybrid closed-loop system.

Let $\bar{\zeta} = [\zeta\T,\tilde{r}_1\T,\dots,\tilde{r}_N\T]\T \in \mathbb{R}^{3N+6}$ with $\tilde{r}_i =  r_i- R\hat{R}\T\hat{r}_i$ for each $i\in \{1,2,\dots,N\}$.  
From \eqref{eqn:psiAR}, \eqref{eqn:def_b_i} and the fact $  \hat{g} = \tilde{R}\T (g -  \tilde{g})$, the innovation term $\sigma_R$ defined in \eqref{eqn:def_sigma_R2} can be rewritten as
\begin{align}
	\sigma_R &= -\sum_{i=1}^N \rho_i  r_i^\times \tilde{R}\T (r_i-\tilde{r}_i) - \rho_{N+1} g^\times\tilde{R}\T  (g-  \tilde{g})  \nonumber\\ 
	&= \psi(Q\tilde{R}) + \sum_{i=1}^N \rho_i r_i^\times \tilde{R}\T \tilde{r}_i + \rho_{N+1} g^\times  \tilde{R}\T \tilde{g} \nonumber \\
	& = \psi(Q\tilde{R}) + \bar{\Gamma}(\tilde{R}) \bar{\zeta}  \label{eqn:sigma_R_error2} 
\end{align}
with
$
	\bar{\Gamma}(\tilde{R}) := [\Gamma(\tilde{R}), \rho_1 r_1^\times \tilde{R}\T, \dots,\rho_N r_N^\times \tilde{R}\T]\in \mathbb{R}^{3\times (3N+6)}
$ and  $Q$ is defined in \eqref{eqn:def_Q}.
It is also clear that $\|\bar{\Gamma}(\tilde{R})\|_F   =  \sqrt{2}\rho_{N+1} \|g\| + \sum_{i=1}^N \sqrt{2} \rho_i \|r_i\|$ for all $\tilde{R}\in SO(3)$.

For the sake of simplicity, let us define the new state $\zeta':= (\bar{\zeta},\tau) \in \mathbb{R}^{3N+6}   \times [0,T_M]$.  From \eqref{eqn:dynamics_R}, \eqref{eqn:def_a},  \eqref{eqn:hybrid_observer1}, \eqref{eqn:tau} and \eqref{eqn:sigma_R_error2}, one obtains the following hybrid closed-loop system:
\begin{align}\arraycolsep=2.9pt\def\arraystretch{1.1}	\label{eqn:Hybrid-Closed-Loop}
	\begin{cases}
		\left.\begin{array}{ll}
				\dot{\tilde{R}} ~~& = \tilde{R}( -k_o \psi(Q\tilde{R}) -k_o \bar{\Gamma}(\tilde{R}) \bar{\zeta} )\\
				\dot{{\zeta}'} ~~& = \begin{bmatrix}
				\bar{A}\bar{\zeta}  \\
				-1
				\end{bmatrix} 
		\end{array}\right\} & {x\in \mathcal{F}} \\
		\left.\begin{array}{ll}
				\tilde{R}^+ & = \tilde{R} \\
				{\zeta}'^+ & \in \begin{bmatrix}
				 (I_{3N+6} - \bar{K}\bar{C})\bar{\zeta}   \\
				[T_m, T_M]
				\end{bmatrix} 
		\end{array} \qquad \quad ~ \right\} & {x\in \mathcal{J}}  
	\end{cases}
\end{align}
where the flow and jump sets are defined as 
	     $\mathcal{F}  : = SO(3)\times \mathbb{R}^{3N+6} \times [0,T_M], 
		\mathcal{J} :=  SO(3)\times \mathbb{R}^{3N+6} \times \{0\}$,
 
and matrices $ \bar{A}, \bar{C}, \bar{K}$ are given as
\begin{align}
	\bar{A} &= \begin{bmatrix}
			A & 0_{6\times 3N}\\
			0_{3N\times 6} & 0_{3N}
		\end{bmatrix}, 
	\bar{C} = \begin{bmatrix}
		C & 0_{3\times 3N}\\
		0_{3N\times 6} & I_{3N}
	\end{bmatrix}, \nonumber \\
	\bar{K} & =  \begin{bmatrix}
		K & 0_{6\times 3N}\\
		0_{3N\times 3} & k_r I_{3N}
	\end{bmatrix}  \label{eqn:bar-A-C-K}
\end{align} 
with matrices  $ {A}, {C}, {K}$ defined in \eqref{eqn:A-C-K}. Note that $\mathcal{F} \cup \mathcal{J} = SO(3)\times \mathbb{R}^{3N+6} \times [0,T_M]$ and the hybrid closed-loop system {\eqref{eqn:Hybrid-Closed-Loop}} is autonomous and satisfies the hybrid basic conditions of \cite[Assumption 6.5]{goebel2012hybrid}.
 
Now, one can state the following result:
\begin{thm}\label{thm:hybrid_observer}
	Consider the hybrid closed-loop system \eqref{eqn:Hybrid-Closed-Loop}. 	Suppose that Assumption \ref{assum:intermittent} holds. Choose $k_o>0$ and $k_v,k_g,k_r>0$ such that there exists a symmetric positive definite matrix $P$ satisfying
	\begin{equation}
		A_g\T e^{\bar{A}\T \tau} Pe^{\bar{A}\tau}A_g - P<0, \quad \forall \tau\in [T_m, T_M] \label{eqn:P}
	\end{equation}
	with $A_g  := I_{3N+6}-\bar{K}\bar{C}$ and   $\bar{A},\bar{C},\bar{K}$ defined in \eqref{eqn:bar-A-C-K}.
	Then, the set   $\mathcal{A}: = \{I_3\} \times \{0_{(3N+6)\times 1}\} \times [0,T_M]$ is AGAS for the hybrid closed-loop system \eqref{eqn:Hybrid-Closed-Loop}. 
\end{thm}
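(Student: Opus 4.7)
The plan is to view the closed-loop system \eqref{eqn:Hybrid-Closed-Loop} as an instance of the cascaded hybrid structure in Lemma \ref{lem:AGAS_hybrid}, with driven state $x=\tilde{R}$ and driving state $y=(\bar{\zeta},\tau)$. The $\tilde{R}$-equation has precisely the form treated in Proposition \ref{prop:AISS-SO3} once $\Gamma$ is replaced by $\bar{\Gamma}$ and the exogenous input is taken to be $u=-k_o\bar{\zeta}$, the $(\bar{\zeta},\tau)$-dynamics are autonomous, and the jump map fixes $\tilde{R}$, so the cascade assumption of Lemma \ref{lem:AGAS_hybrid} is met. The hybrid basic conditions were already noted after \eqref{eqn:bar-A-C-K}, so it suffices to verify the three hypotheses of Lemma \ref{lem:AGAS_hybrid}.

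For the almost global ISS hypothesis, tune $\rho_i>0$ so that $\bar{Q}$ has three distinct positive eigenvalues, which is possible under the conditions of Lemma \ref{lem:Q}. The map $\bar{\Gamma}$ is linear in $\tilde{R}$, has uniformly bounded Frobenius norm on $SO(3)$ (as already observed in the text), and is globally Lipschitz with constant $c_{\bar{\Gamma}}\leq\sqrt{2}\,\rho_{N+1}\|g\|+\sum_{i=1}^{N}\sqrt{2}\,\rho_i\|r_i\|$. Proposition \ref{prop:AISS-SO3} then directly yields almost global ISS of the $\tilde{R}$-subsystem with respect to $\{I_3\}$ and input $\bar{\zeta}$.

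For GAS of the $(\bar{\zeta},\tau)$-subsystem, I would use the hybrid Lyapunov candidate
\[
V(\bar{\zeta},\tau)=\bar{\zeta}^\top e^{\bar{A}^\top \tau}\,P\,e^{\bar{A}\tau}\,\bar{\zeta},
\]
which is positive definite in $\bar{\zeta}$ uniformly in $\tau\in[0,T_M]$ since $P\succ0$ and $\tau$ lies in a compact interval. Because $\bar{A}$ commutes with its exponential and $\dot{\tau}=-1$, the chain rule gives $\dot{V}\equiv 0$ along flows. At a jump, $\tau=0$, $\bar{\zeta}^+=A_g\bar{\zeta}$, and $\tau^+\in[T_m,T_M]$, so
\[
V^+ - V = \bar{\zeta}^\top\bigl(A_g^\top e^{\bar{A}^\top\tau^+}P\,e^{\bar{A}\tau^+}A_g - P\bigr)\bar{\zeta}.
\]
The LMI \eqref{eqn:P}, together with continuity of the left-hand side in $\tau^+$ and compactness of $[T_m,T_M]$, furnishes a uniform rate $\lambda\in(0,1)$ with $V^+\leq\lambda V$. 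Hence $V(t,j)\leq\lambda^j V(0,0)$, which, combined with the uniform lower bound $\tau\geq T_m$ between jumps, gives hybrid exponential convergence of $\bar{\zeta}$ to zero and therefore GAS (in fact GES) of $\mathcal{A}_y=\{0_{(3N+6)\times 1}\}\times[0,T_M]$.

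Finally, for completeness and $t\to+\infty$: along flows $\tilde{R}$ lives in the compact manifold $SO(3)$, $\bar{\zeta}$ satisfies a linear ODE, and $\tau$ is trapped in $[0,T_M]$, precluding finite escape; Assumption \ref{assum:intermittent} and the timer dynamics enforce a dwell time $T_m>0$ between jumps, so $t$ grows at least linearly with the jump count $j$. Invoking Lemma \ref{lem:AGAS_hybrid} then yields AGAS of $\mathcal{A}=\{I_3\}\times\{0_{(3N+6)\times 1}\}\times[0,T_M]$. The main technical hurdle is the GAS step: the LMI \eqref{eqn:P} is naturally a discrete-time Lyapunov inequality for the sampled-data cascade, and it must be repackaged as a hybrid Lyapunov function that is constant during flows yet contracts by a uniform factor $\lambda<1$ at every jump, with the resulting decay translated into the $(t,j)$-based notion of GAS required by Lemma \ref{lem:AGAS_hybrid}; the remaining verifications (Lipschitz/boundedness of $\bar{\Gamma}$, no finite escape, $t\to\infty$) are essentially bookkeeping.
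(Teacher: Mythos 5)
Your proposal is correct and follows essentially the same route as the paper: the same cascade decomposition invoking Lemma \ref{lem:AGAS_hybrid} and Proposition \ref{prop:AISS-SO3} (with $\bar{\Gamma}$ in place of $\Gamma$), the same timer-dependent Lyapunov function $V(\zeta')=\bar{\zeta}\T e^{\bar{A}\T\tau}Pe^{\bar{A}\tau}\bar{\zeta}$ that is constant along flows and contracts at jumps via \eqref{eqn:P}, and the same completeness/dwell-time bookkeeping to translate the per-jump decay into $(t,j)$-exponential convergence of the $\zeta'$-subsystem before concluding AGAS.
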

\begin{proof}
	See Appendix \ref{sec:hybrid_observer}.	
\end{proof}
The optimization problem \eqref{eqn:P} can be solved using
the polytopic embedding technique proposed in \cite{ferrante2016state} and the finite-dimensional LMI approach proposed in \cite{sferlazza2018time}. An explicit procedure motivated from \cite{sferlazza2018time} can be found in \cite{wang2020nonlinear}. Note that this procedure only provides an algorithm to verify the existence of such a symmetric positive definite matrix $P$ satisfying \eqref{eqn:P} when the matrix $\bar{K}$ is properly chosen. However, it is still not clear how to find such gain parameters $k_v,k_g,k_r$, and manual trial‐and‐errors are required in practice. 
The following proposition provides a sufficient condition for the gain parameters $k_v,k_g,k_r$ to guarantee the existence of a solution of \eqref{eqn:P}.
\begin{prop}\label{prop:relaxed-K}
	Let  
	\begin{equation}
		\begin{cases}
			0< k_r <1 \\
			0<k_v <1  \\
			0<k_g<\frac{1-\sqrt{1-k_v}}{T_M} 
		\end{cases} \label{eqn:k_v-k_g-k_r}
	\end{equation}
	Then, there exists a symmetric positive definite matrix $P$ satisfying \eqref{eqn:P}.
\end{prop}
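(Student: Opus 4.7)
The plan is to exploit the block-Kronecker structure of $\bar A,\bar C,\bar K$ in~\eqref{eqn:bar-A-C-K} to reduce \eqref{eqn:P} to a $2\times 2$ LMI feasibility problem parameterized by $\tau\in[T_m,T_M]$, and then to construct an explicit SPD matrix $P_0$ that works uniformly under the stated gain bounds~\eqref{eqn:k_v-k_g-k_r}.

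First, since $\bar A$ is block diagonal with nilpotent upper block $A$ (so $e^{A\tau}=I_6+A\tau$ is \emph{affine} in $\tau$) and zero lower block, the product $e^{\bar A\tau}A_g$ is block diagonal with upper block $e^{A\tau}(I_6-KC)$ and lower block $(1-k_r)I_{3N}$. Taking $P$ block diagonal with $I_{3N}$ as its lower block decouples \eqref{eqn:P} into (i) the scalar inequality $(1-k_r)^2<1$, which holds for any $k_r\in(0,1)$, and (ii) an LMI on the $6\times 6$ upper block $P_v$. Using the Kronecker factorizations $A=A_2\otimes I_3$ with $A_2=\bigl[\begin{smallmatrix}0 & 1 \\ 0 & 0\end{smallmatrix}\bigr]$, $C=C_2\otimes I_3$ with $C_2=[1,\,0]$, and $K=K_2\otimes I_3$ with $K_2=\bigl[\begin{smallmatrix}k_v\\k_g\end{smallmatrix}\bigr]$, the ansatz $P_v=P_0\otimes I_3$ reduces~(ii) to finding an SPD $P_0\in\mathbb{R}^{2\times 2}$ with
\[
M_0(\tau)^\top P_0 M_0(\tau) - P_0 < 0,\qquad \forall \tau\in[T_m,T_M],
\]
where $M_0(\tau):=\bigl[\begin{smallmatrix}1-k_v-k_g\tau & \tau\\ -k_g & 1\end{smallmatrix}\bigr]$ is affine in $\tau$, with constant determinant $1-k_v\in(0,1)$ and trace $2-k_v-k_g\tau$.

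Next, I would take the ansatz $P_0=\bigl[\begin{smallmatrix}p_1 & -p_3\\ -p_3 & p_2\end{smallmatrix}\bigr]$ with $p_1,p_2,p_3>0$ and apply Sylvester's criterion to $\Delta(\tau):=M_0(\tau)^\top P_0 M_0(\tau)-P_0$. A direct expansion yields $\Delta_{22}(\tau)=\tau(\tau p_1-2p_3)$, so $\Delta_{22}<0$ for all $\tau\in[T_m,T_M]$ is equivalent to $p_3>T_Mp_1/2$. After completing the square in $\tau$, the inequality $\Delta_{11}(\tau)<0$ can be secured for all $\tau$ by choosing $p_2$ sufficiently large, using $k_v\in(0,1)$. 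The binding Sylvester condition is $\det\Delta(\tau)>0$, which---after algebraic simplification exploiting the constancy of $\det M_0=1-k_v$---reduces to a scalar quadratic feasibility problem in the auxiliary parameter $k_g(p_2/p_1-(p_3/p_1)^2)$, whose solvability is equivalent to $(1-k_gT_M)^2>1-k_v$, i.e., to the stated bound $k_g<(1-\sqrt{1-k_v})/T_M$.

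The main obstacle is the third step: the determinant condition $\det\Delta(\tau)>0$ couples all entries of $\Delta$ and must hold uniformly over $\tau\in[T_m,T_M]$ with a single $P_0$. The appearance of $\sqrt{1-k_v}$ is natural from the Schur-stability viewpoint: since $\det M_0$ is $\tau$-invariant, the spectral radius satisfies $\rho(M_0(\tau))\geq\sqrt{1-k_v}$, and the existence of a common quadratic Lyapunov function over the worst-case sampling period $T_M$ requires the contraction factor $1-k_gT_M$ to dominate $\sqrt{1-k_v}$. Once $P_0$ is thus constructed, the block-diagonal $P$ from the first step certifies \eqref{eqn:P} and completes the existence proof.
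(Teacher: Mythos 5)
Your first two reduction steps are exactly right and coincide with the paper's: the block-diagonal choice $P=\diag(\bar P, I_{3N})$ decouples \eqref{eqn:P} into $(1-k_r)^2<1$ plus a condition on the $6\times 6$ block, and the Kronecker structure reduces the latter to the $2\times 2$ matrix $M_0(\tau)=\bigl[\begin{smallmatrix}1-k_v-k_g\tau & \tau\\ -k_g & 1\end{smallmatrix}\bigr]$ with constant determinant $1-k_v$. Where you diverge is the final, decisive step. The paper computes the eigenvalues $\lambda(\tau)=1-\frac{1}{2}\bigl((k_g\tau+k_v)\pm\sqrt{(k_g\tau+k_v)^2-4k_g\tau}\bigr)$ in closed form, shows they lie inside the unit circle for every $\tau\in[T_m,T_M]$ under \eqref{eqn:k_v-k_g-k_r}, and then invokes the discrete-time Lyapunov equation to obtain $\bar P$. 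You instead attempt to exhibit a single $P_0$ via Sylvester's criterion uniformly in $\tau$ --- which, if completed, would in fact be stronger, since pointwise Schur stability of a matrix family does not by itself deliver a \emph{common} quadratic Lyapunov function; your route squarely addresses a point the paper passes over quickly.

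As written, however, the construction has two genuine gaps. First, the claim that $\Delta_{11}(\tau)<0$ ``can be secured by choosing $p_2$ sufficiently large'' is sign-wrong: expanding gives $\Delta_{11}=p_1(a^2-1)+2p_3k_g a+p_2k_g^2$ with $a=1-k_v-k_g\tau$, and the term $p_2k_g^2>0$ grows with $p_2$, so enlarging $p_2$ pushes $\Delta_{11}$ up, not down. Moreover the parameters are not free: $\Delta_{22}=\tau(\tau p_1-2p_3)<0$ forces $p_3>T_Mp_1/2$, and $P_0>0$ forces $p_2>p_3^2/p_1>T_M^2p_1/4$, so $p_2$ is bounded \emph{below} in terms of $p_1$ and the resulting tension with $\Delta_{11}<0$ must be resolved explicitly. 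Second, the binding condition $\det\Delta(\tau)>0$ --- which you correctly identify as the crux --- is only asserted to ``reduce after algebraic simplification'' to $(1-k_gT_M)^2>1-k_v$; that reduction is precisely the content of the proposition and is nowhere carried out, so the argument currently replaces the statement by an unverified claim of comparable difficulty. To finish along your route you must perform the determinant expansion and exhibit feasible $(p_1,p_2,p_3)$; the shorter path is the paper's, namely verifying from the closed-form eigenvalues that \eqref{eqn:k_v-k_g-k_r} places the spectrum of $e^{A\tau}(I_6-KC)$ in the open unit disc for all $\tau\in[T_m,T_M]$ and concluding via the discrete-time Lyapunov equation.
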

\begin{proof}	
	See Appendix \ref{sec:relaxed-K}
\end{proof}

\section{Simulation}
In this simulation, we consider an autonomous vehicle equipped with an IMU (including an accelerometer, a gyroscope and a magnetometer) and a Doppler Velocity Log (DVL) sensor providing the linear velocity in the body frame. The accelerometer and gyroscope measurements are sampled at $400(Hz)$, and  the magnetometer and DVL measurements are sampled at about $10(Hz)$ with $T_m=0.09(s)$ and $T_M = 0.11(s)$. An example of the solution of the timer $\tau$ defined in \eqref{eqn:tau} is shown in Fig. \ref{fig:tau}. The vehicle is stabilized along an ``8"-shape trajectory with inertial frame linear velocity  given by $v(t)=[-\sin( t), -4 \sin( t)\cos( t), 0]\T (m/s)$ and angular velocity given by $\omega(t) = [\sin(0.1\pi t),0.1,\cos(0.1\pi t)]\T$. The earth magnetic field and gravity in the inertial frame are given as $r_1 = [0.36,0.64,0]\T$ and $g = [0,0,-9.81]\T$, respectively. For
comparison purposes,  we also consider the continuous observer \eqref{eqn:Cont-Obsv} running at $400(Hz)$ with a zero-order-hold (ZOH) method when the measurements of the linear velocity and the inertial vectors are not available. 

\begin{figure}[!ht]
	\centering
	\includegraphics[width=0.95\linewidth]{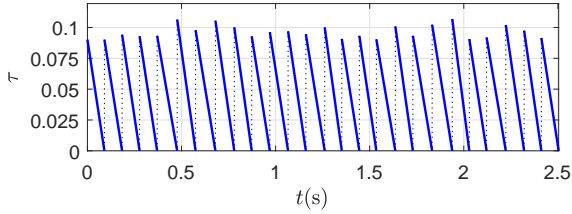} \vspace*{-0.26cm}
	\caption{The evolution of $\tau$ in \eqref{eqn:tau} with $T_m=0.09$ and $T_M = 0.11$.}
	\label{fig:tau}
\end{figure}

The initial conditions are chosen as $R(0) = I_3$, $\hat{R} = \mathcal{R}_a(0.99\pi, u), u\in\mathbb{S}^2$, and $\hat{v}(0)=\hat{g}(0)=\hat{r}_1(0)= 0$. The gain parameters are tuned such that both observers have similar convergence rate with $k_o = 15, k_v=2.5, k_g=8$ for observer \eqref{eqn:Cont-Obsv} and $k_o = 15, k_v=0.7,k_g=4,k_r=0.1$ for observer \eqref{eqn:hybrid_observer1}.  Two sets of  simulation results are shown in Fig. \ref{fig:Simulation}. The first case considers noise-free measurements, while the second case considers the measurements corrupted with zero mean Gaussian noise of $0.01$ variance in the gyro and magnetometer measurements and $0.1$ variance in the accelerometer and DVL measurements.  As one can see, the steady state attitude estimation error of our hybrid observer \eqref{eqn:hybrid_observer1} is significantly less than that of the continuous observer \eqref{eqn:Cont-Obsv} with a ZOH method. It is worth pointing out that, even in the noise-free case,  the steady state estimation errors of the continuous observer \eqref{eqn:Cont-Obsv} with a practical ZOH method do not converge to zero in the presence of intermittent measurements.

	\begin{figure}[!ht]
		\centering 
			\includegraphics[width=0.9\linewidth]{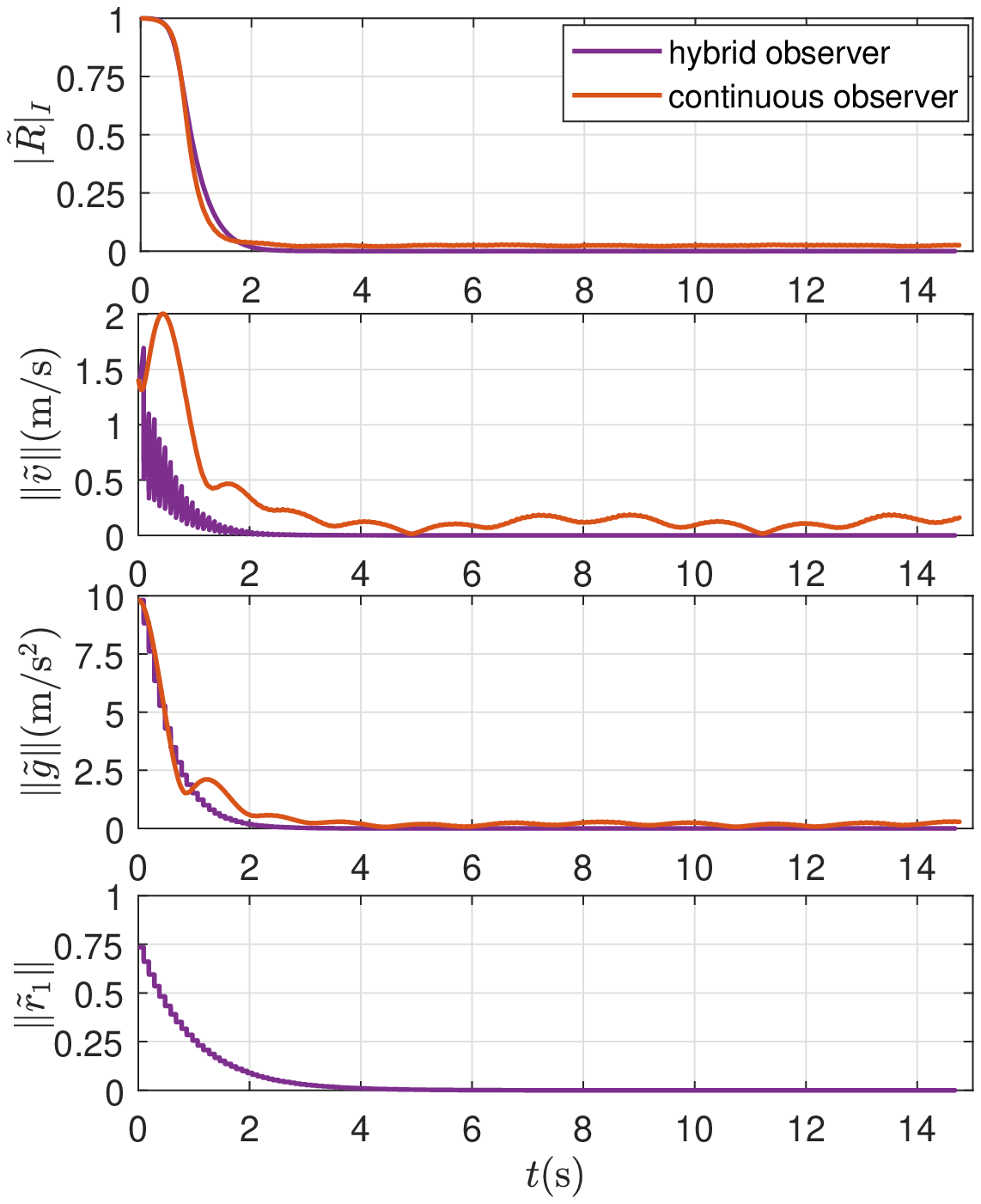}  \\ 
			\includegraphics[width=0.9\linewidth]{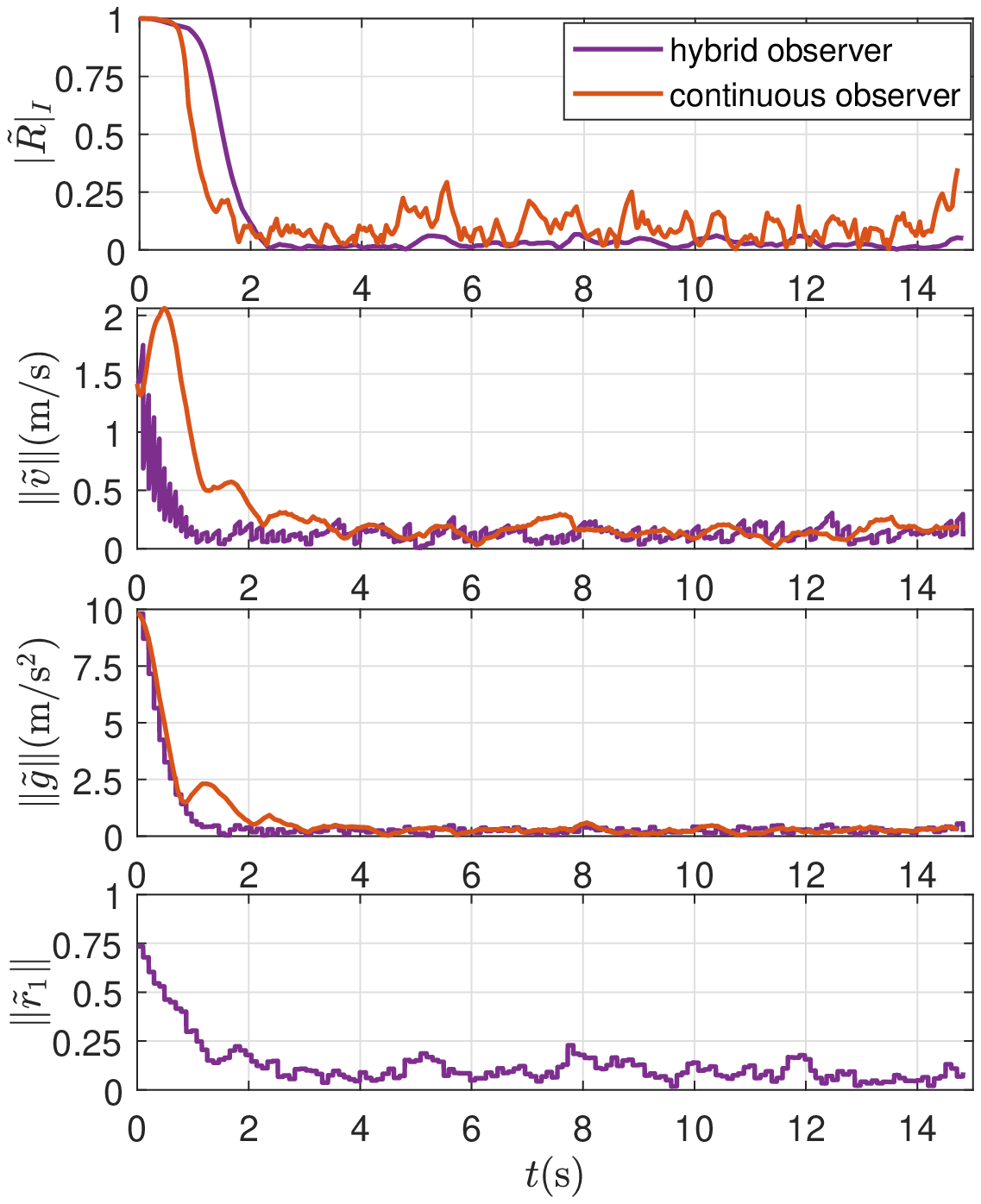} 
        \vspace*{-0.28cm}
		\caption{Simulation results of continuous observer \eqref{eqn:Cont-Obsv} with ZOH   and hybrid observer \eqref{eqn:hybrid_observer1}. The results with noise-free measurements and noisy measurements are shown in the first and second plot, respectively.}
		\label{fig:Simulation}
	\end{figure}  

\section{Conclusion}
In this work, we formulated the problem of velocity aided attitude estimation problem with intermittent measurements as an interconnection of an almost global ISS system and a GAS system, and we proved that the interconnected system is almost globally asymptotically stable. It is shown that the first version of our observer, relying on continuous measurements, does not preserve its theoretically guaranteed convergence and performance when the measurements are intermittent in nature. This remark is true for the available velocity-aided observers existing in the literature. To overcome this problem, we redesigned our attitude observer using hybrid systems tools to efficiently handle the case where the measurements of the linear velocity and inertial vectors are intermittent. We introduced a virtual hybrid counter to capture the intermittent nature of the measurements and proposed a hybrid velocity-aided attitude observer with AGAS guarantees. The simulation results show that this hybrid observer exhibits better performance than the observer designed with continuous measurements, when implemented in applications involving sensors with different bandwidth properties.


\appendix

\subsection{Proof of Proposition \ref{prop:AISS-SO3}} \label{sec:AISS-SO3}  
The proof of   Proposition \ref{prop:AISS-SO3} relies on  the results in \cite[Proposition 2]{angeli2010stability}. We first show that system \eqref{eqn:dot_tildeR} satisfies the three Assumptions A0–A2 in \cite{angeli2010stability}.  
One can easily verify that A0 is fulfilled, since system \eqref{eqn:dot_tildeR} evolves on the compact manifold $SO(3)\times \mathcal{D}_u$ and $\|\Gamma(X_1)-\Gamma(X_2)\|_F \leq c_\Gamma\|X_1-X_2\|_F$. 
Moreover, consider the smooth function on $SO(3)$
\begin{equation}
	V(\tilde{R}) = \tr(Q(I_3-\tilde{R})) \label{eqn:V_R}
\end{equation}
whose time derivative along the trajectory \eqref{eqn:dot_tildeR} with $u\equiv 0$ is given as 
\begin{align*}
	\dot{V}(\tilde{R}) &= \tr(-Q\tilde{R}(-k_o \psi(Q\tilde{R})^\times))   = -2 k_o \|\psi(Q\tilde{R})\|^2.
\end{align*}
This implies that  $\dot{V}<0$ for all $\tilde{R}\in SO(3)$ and $\psi(Q\tilde{R})\neq 0$, and then A1 is fulfilled. Applying LaSalle’s principle, it follows that the solution $\tilde{R}$ of system \eqref{eqn:dot_tildeR} with $u\equiv 0$  converges asymptotically to the set of equilibria $\mathcal{W} := \{\tilde{R}\in SO(3): \psi(\bar{Q}\tilde{R}) = 0 \}$. Since  $\psi(\bar{Q}\tilde{R}) = 0$ implies $\bar{Q}\tilde{R} = \tilde{R}\T \bar{Q}$, one can explicitly rewrite $\mathcal{W}$ as $\mathcal{W} = \{I_3\}\cup \{\tilde{R}\in SO(3): \tilde{R} = \mathcal{R}_a(\pi,v), v\in \mathcal{E}(\bar{Q})\}$. Note that the undesired equilibria in the set $\{\tilde{R}\in SO(3): \tilde{R} = \mathcal{R}_a(\pi,v), v\in \mathcal{E}(\bar{Q})\}$ are isolated since $\bar{Q}$ has three distinct eigenvalues. Moreover, one can show that the linearized system (with $u\equiv 0$) at each undesired equilibrium has at least one positive eigenvalue (for instance, see the proof of \cite[Theorem 1]{wang2021nonlinear}).  Hence, the equilibrium point $I_3$ of \eqref{eqn:dot_tildeR} with $u\equiv 0$ is almost globally asymptotically stable and system \eqref{eqn:dot_tildeR} satisfies Assumptions A0–A2 in \cite{angeli2010stability}.

On the other hand, from \eqref{eqn:dot_tildeR} one can show that
\begin{align}
	\frac{d}{dt}|\tilde{R}|_I^2 &= \frac{1}{2}\psi\T(\tilde{R}) ( -  k_o  \psi(Q\tilde{R}) +  \Gamma(\tilde{R})u ) \nonumber \\
	&\leq  -\frac{1}{2}k_o\lambda_m^{\bar{Q}} \|\psi(\tilde{R})\|^2  +   \frac{1}{2}  \|\psi(\tilde{R})\| \|\Gamma(\tilde{R}) u\| \nonumber \\
	&\leq  -2k_o\lambda_m^{\bar{Q}} (1-|\tilde{R}|_I^2)|\tilde{R}|_I^2  +   \frac{\sqrt{3}c_\Gamma}{4}    \|u\|  \nonumber \\
	& \leq  -2k_o\lambda_m^{\bar{Q}} |\tilde{R}|_I^2 + 2k_o\lambda_m^{\bar{Q}}   +     \frac{\sqrt{3} c_\Gamma  c_u }{4}   \label{eqn:dot_tilde_R_I1}
\end{align}
where $c_u: = \max_{u\in \mathcal{D}_u}  \|u\|$, $|\tilde{R}|_I^2 = \frac{1}{4}\tr(I_3-\tilde{R}) \in [0,1]$, and we made use of the facts  $\psi\T(\tilde{R})\psi(Q\tilde{R}) = \psi\T(\tilde{R})\bar{Q}\psi(\tilde{R}) \leq  \lambda_m^{\bar{Q}}\|\psi(\tilde{R})\|^2$, $\|\Gamma(\tilde{R})\|_F \leq c_\Gamma \|\tilde{R}\|_F = \sqrt{3}c_\Gamma $ and $\|\psi(\tilde{R})\|^2 = 4(1-|\tilde{R}|_I^2)|\tilde{R}|_I^2\leq 1$ for all $\tilde{R}\in SO(3)$.  Hence, by virtue of \cite[Proposition 3]{angeli2010stability},   system \eqref{eqn:dot_tildeR} fulfills the ultimate boundedness property. Therefore, one can conclude from \cite[Proposition 2]{angeli2010stability} that system \eqref{eqn:dot_tildeR} is almost globally ISS with respect to the equilibrium $I_3$ and input $u$.

\subsection{Proof of Theorem \ref{thm:conti}}\label{sec:thm1}
From \eqref{eqn:Cont-Closed-Loop}, the overall closed-loop system has the same structure as the one described in Lemma \ref{lem:AGAS_conti}.
Hence, we first show that the equilibrium ($\zeta = 0$) of $\zeta$-subsystem \eqref{eqn:Cont-Closed-Loop-zeta} with matrices $A,C,K$ defined in \eqref{eqn:A-C-K} is globally exponentially stable. From \eqref{eqn:A-C-K}, one can easily show that
\begin{align*}
	A-KC = \begin{bmatrix}
		-k_v I_3 & I_3\\
		-k_g I_3 & 0_3 
	\end{bmatrix}
\end{align*} 
which implies that matrix $A-KC$ is Hurwitz for all $k_v,k_g>0$. Hence, for each $\mu>0$ there exists a symmetric positive definite matrix $P$ satisfying the following Lyapunov equation 
\begin{equation}
	(A-KC)\T P + P(A-KC)  = -\mu I_6. \label{eqn:Lyapunov_Eq}
\end{equation}
Consider the Lyapunov function candidate $V(\zeta) = \zeta\T P\zeta$. From \eqref{eqn:Cont-Closed-Loop-zeta}, \eqref{eqn:A-C-K} and \eqref{eqn:Lyapunov_Eq}, one can easily show that 
\begin{align}
	\dot{V}(\zeta) &= \zeta\T\left((A-KC)\T P + P(A-KC) \right)\zeta \nonumber \\
	& = - \mu \|\zeta\|^2 \leq -\frac{\mu}{\lambda_M^P} V(\zeta).
\end{align}
It follows that $\zeta$ converges globally exponentially to zero. Moreover, from Proposition \ref{prop:AISS-SO3}, one obtains that the $\tilde{R}$-subsystem \eqref{eqn:Cont-Closed-Loop-R} is almost globally ISS with respect to the equilibrium $I_3$ and input $\zeta$. Therefore,  by virtual of Lemma \ref{lem:AGAS_conti},  the equilibrium  $(I_3,0)$ of   system \eqref{eqn:Cont-Closed-Loop} is AGAS.

\subsection{Proof of Theorem \ref{thm:hybrid_observer}}\label{sec:hybrid_observer} 
From \eqref{eqn:Hybrid-Closed-Loop},   the overall closed-loop system has the same structure as the one described in Lemma \ref{lem:AGAS_hybrid}. 
Hence, similar to the proof of Theorem \ref{thm:conti}, we first show that the set $\mathcal{A}':=\{0_{(3N+6)\times 1}\} \times [0,T_M]$ is globally exponentially stable for the $\zeta'$-subsystem  with matrices $\bar{A},\bar{C},\bar{K}$ defined in \eqref{eqn:bar-A-C-K}. Consider the following Lyapunov function candidate:
\begin{align}
	V(\zeta') = \bar{\zeta}\T e^{\bar{A}\T \tau} Pe^{\bar{A}\tau}\bar{\zeta}  \label{eqn:Lyapunov_Vzeta'}
\end{align}
where $P=P\T >0$ is the solution to \eqref{eqn:P}. Let $|\zeta'|_{\mathcal{A}'}: = \inf_{y\in \mathcal{A}'} \|\zeta'-y\| = \|\bar{\zeta}\| $. One can easily verify that
\begin{equation}
	\underline{\alpha} |\zeta'|_{\mathcal{A}'}^2 \leq 	V(\zeta') \leq  \bar{\alpha} |\zeta'|_{\mathcal{A}'}^2  \label{eqn:Lyapunov_Vzeta'-bound}
\end{equation}
where 
$\underline{\alpha} := \min_{\tau \in [0,T_M]} \lambda_m{(e^{\bar{A}\T \tau} e^{\bar{A}\tau})}  \lambda_m(P)
$ and $\bar{\alpha} := \max_{\tau \in [0,T_M]} \lambda_M{(e^{\bar{A}\T \tau} e^{\bar{A}\tau})}   \lambda_M(P)
$. 
Since the matrix $\bar{A}$, defined in \eqref{eqn:bar-A-C-K},  is nilpotent with $\bar{A}^2=0$, one can verify that $e^{\bar{A}\tau} = \sum_{k=0}^{\infty} \frac{1}{k!}\bar{A}^k =I_{3N+6} + \bar{A}\tau$ 
and $0< \lambda_m{(e^{\bar{A}\T \tau} e^{\bar{A}\tau})} \leq 1\leq \lambda_M{(e^{\bar{A}\T \tau} e^{\bar{A}\tau})}$ for all $\tau\in [0,T_M]$.  
Using the facts that $\frac{d}{dt}e^{\bar{A}\tau} = \dot{\tau}\bar{A} e^{\bar{A}\tau} = - \bar{A} e^{\bar{A}\tau}$ and $\bar{A}e^{\bar{A}\tau}=e^{\bar{A}\tau} \bar
{A}$, one obtains
\begin{align*}
	\frac{d}{dt}e^{\bar{A}\T\tau} Pe^{\bar{A}\tau}    = e^{\bar{A}\T\tau} (-\bar{A}\T P - P\bar{A})e^{\bar{A}\tau}.
\end{align*}
Thus, the time-derivative of ${V}(\zeta')$ along the flows of \eqref{eqn:Hybrid-Closed-Loop} is given by 
\begin{align}
	\dot{V}(\zeta') &= \dot{\bar{\zeta}}\T e^{\bar{A}\T\tau} Pe^{\bar{A}\tau} \bar{\zeta}   +  \bar{\zeta}\T e^{\bar{A}\T\tau} Pe^{\bar{A}\tau} \dot{\bar{\zeta}} \nonumber \\
	&\quad  +  \bar{\zeta}\T e^{\bar{A}\T\tau}   (-\bar{A}\T P - P\bar{A}) e^{\bar{A}\tau} \bar{\zeta} \nonumber \\
	& = 0, \quad \forall (\tilde{R},\zeta')\in \mathcal{F}.   \label{eqn:Lyapunov_Vzeta'dot}
\end{align}
This implies that $V(\zeta')$ is non-increasing  in the flows.  
Since inequality \eqref{eqn:P} holds for all $\tau\in [T_m,T_M]$,  there exists a (small enough) positive scalar $\beta <  \bar{\alpha}$ such that  
\[
A_g\T e^{\bar{A}\T\tau} Pe^{\bar{A}\tau}A_g - P \leq  - \beta I_{3N+6} <0, \forall \tau\in [T_m,T_M]. 
\]
Hence, for each jump it follows  from \eqref{eqn:Hybrid-Closed-Loop} and \eqref{eqn:P}-\eqref{eqn:Lyapunov_Vzeta'-bound} that
\begin{align}
	V(\zeta'^+) & =  \bar{\zeta}\T A_ge^{\bar{A}\T\tau} Pe^{\bar{A}\tau} A_g\bar{\zeta} \nonumber \\
	&= V(\zeta')  + \bar{\zeta}\T(A_g\T e^{\bar{A}\T\tau} Pe^{\bar{A}\tau}A_g - P) \bar{\zeta} \nonumber \\
	&\leq  V(\zeta')-\beta \|\bar{\zeta}\|^2 \nonumber\\
	& \leq \left( 1-\frac{\beta}{\bar{\alpha}}\right)  V(\zeta') \nonumber \\
	&\leq  e^{-\lambda_J } V(\zeta') , \quad  \forall (\tilde{R},\zeta')\in \mathcal{J}   \label{eqn:Lyapunov_Vzeta'+} 
\end{align}
where $\tau\in [T_m,T_M], \lambda_J : = -\ln(1-\frac{\beta}{\bar{\alpha}})$. 
Using the fact $\beta <  \bar{\alpha}$, it is clear that $0<1-\frac{\beta}{\bar{\alpha}} <1$ and $\lambda_J>0$. Hence, $V(\zeta')$ is also non-increasing during the jumps. Note that the hybrid closed-loop system \eqref{eqn:Hybrid-Closed-Loop} satisfies the hybrid basic conditions \cite[Assumption 6.5]{goebel2012hybrid}.  By virtue of  \cite[Definition 2.6]{goebel2012hybrid}, it is straightforward to check that for every initial condition $\zeta'(0,0)\in \mathbb{R}^{3N+6} \times [0,T_M]$ there exists at least a nontrivial solution to \eqref{eqn:Hybrid-Closed-Loop} and that every maximal solution to \eqref{eqn:Hybrid-Closed-Loop} is complete, \ie, $t+j \to +\infty$. Since $T_m$ is strictly positive by Assumption \ref{assum:intermittent}, there is no Zeno behavior and $t\to + \infty$ as  $t+j \to +\infty$. Moreover, using the fact $j T_m \leq t\leq j T_M + T_M$ for each $(t,j)\in \dom \zeta'$, one has   $j\geq \frac{1}{1+T_M} (t+j) - \frac{T_M}{1+T_M}$. Hence, from \eqref{eqn:Lyapunov_Vzeta'}, \eqref{eqn:Lyapunov_Vzeta'dot} and \eqref{eqn:Lyapunov_Vzeta'+}, one can show that
\begin{align*}
	V(\zeta'(t,j)) &\leq  e^{-\lambda_J j} V(\zeta'(0,0)) \nonumber \\
	& \leq    e^{\left( -\frac{\lambda_J }{1+T_M} (t+j) + \frac{\lambda_J T_M}{1+T_M}\right) } V(\zeta'(0,0)) \nonumber \\
	& \leq   \kappa e^{-\lambda (t+j)} V(\zeta'(0,0)) ,\quad  \forall (t,j)\in \dom \zeta'  
\end{align*}
where $\lambda:= \frac{\lambda_J }{1+T_M}$ and $\kappa: = e^{\frac{\lambda_J T_M}{1+T_M}}$. From \eqref{eqn:Lyapunov_Vzeta'-bound}, one can further show that 	$|\zeta'(t,j)|_{\mathcal{A}'}   
	 \leq   \sqrt{ \frac{\bar{\alpha}\kappa}{\underline{\alpha}}}e^{-\frac{\lambda}{2} (t+j)}  |\zeta'(0,0)|_{\mathcal{A}'} $ for all $ (t,j)\in \dom \zeta' $, 
which implies that the sub-state $\zeta'$ of the overall system \eqref{eqn:Hybrid-Closed-Loop} converges globally exponentially to $\mathcal{A}'$. Similar to the proof of Theorem \ref{thm:conti},  by virtual of Proposition \ref{prop:AISS-SO3} and Lemma \ref{lem:AGAS_hybrid}, one concludes that equilibrium set   $\mathcal{A}$ is AGAS for the hybrid closed-loop system \eqref{eqn:Hybrid-Closed-Loop}.

\subsection{Proof of Proposition \ref{prop:relaxed-K}} \label{sec:relaxed-K}
Recall the definitions of $\bar{A},\bar{C},\bar{K}$ defined in \eqref{eqn:bar-A-C-K} and $A_g = I_{3N+6}-\bar{K}\bar{C} $, one obtains 
\begin{align*}
	e^{\bar{A}\tau}A_g =  \begin{bmatrix}
		e^{{A}\tau}(I_6-KC) & 0_{6\times 3N}\\
		0_{3N\times 3} & (1-k_r)I_{3N}
	\end{bmatrix}  
\end{align*} 
with $ {A}, {C}, {K}$ defined in \eqref{eqn:A-C-K}. 
Choosing $P = \diag(\bar{P},I_{3N})$ with some $\bar{P}\in \mathbb{R}^{6\times 6}$, inequality \eqref{eqn:P} holds if 
\begin{equation}
	(1-k_r)^2 < 1 \label{eqn:k_r}
\end{equation} 
and there exists a symmetric matrix $\bar{P}>0$ satisfying 
\begin{multline}
	(I_6- {K} {C})\T e^{ {A}\T\tau} \bar{P}e^{{A}\tau}(I_6- {K} {C}) - \bar{P}<0  \label{eqn:barP}
\end{multline}
for all  $\tau\in [T_m, T_M] $.
Applying the discrete-time Lyapunov equation, the existence of $\bar{P}$ satisfying \eqref{eqn:barP} for all  $\tau\in [T_m, T_M] $ is guaranteed if all the eigenvalues of $e^{ {A}\tau}(I_6- {K} {C})$ are located in the unit circle for all $\tau\in [T_m, T_M]$. Using the fact
\begin{align*}
	e^{ {A}\tau}(I- {K} {C}) = \begin{bmatrix}
		(1- k_g \tau  -k_v)I_3 & \tau I_3 \\
		-k_g I_3 & I_3
	\end{bmatrix}
\end{align*}
one can verify that the eigenvalues of the matrix $e^{ {A}\tau}(I_6- {K} {C})$ are in the form of
$
	\lambda(\tau) = 1- \frac{ 1}{2}( (k_g \tau + k_v) \pm \sqrt{(k_g \tau + k_v)^2 - 4 k_g \tau  }   ).   
$
To guarantee that all the eigenvalues $\lambda(\tau)$ are in the unit circle for all $\tau\in [T_m, T_M]$, it is sufficient to choose $k_v>0$ and $k_g>0$ satisfying    
\begin{align}
	\begin{cases}
		(k_g \tau + k_v)^2 - 4 k_g \tau\geq 0 \\
		 (k_g \tau + k_v) + \sqrt{(k_g \tau + k_v)^2 - 4 k_g \tau  }   < 2
	\end{cases}  \label{eqn:k_v-k_g}
\end{align} 
for all $ \tau\in [T_m, T_M]$.
One can further show that inequalities \eqref{eqn:k_v-k_g} hold for all $ \tau\in [T_m, T_M]$ if  
\begin{equation}
		0<k_v <1,\quad 
		0<k_gT_M < 1-\sqrt{1-k_v}  
\label{eqn:k_v-k_g2}.
\end{equation} 
Therefore, one concludes  \eqref{eqn:k_v-k_g-k_r} from     \eqref{eqn:k_r} and \eqref{eqn:k_v-k_g2}.  
	
\bibliographystyle{IEEEtran}
\bibliography{mybib}
	

\end{document}